
\documentclass[review]{elsarticle}




\usepackage{amssymb}
\usepackage[figuresright]{rotating}
\usepackage{multirow}
\usepackage{algorithm}
\usepackage{algorithmicx}
\usepackage{algpseudocode}
\usepackage{subfigure}
\usepackage{amsmath}
\usepackage{amsthm}
\usepackage{amsfonts}
\usepackage{bbm}
\usepackage{color}
\usepackage{url}
\usepackage{booktabs}
\usepackage{ulem}

\newtheorem{theorem}{Theorem}

\newtheorem{definition}{Definition}
\newtheorem{example}{Example}
\newtheorem{proposition}{proposition}

\usepackage{lineno}

\journal{Information Sciences}

\begin{document}

\begin{frontmatter}



\title{Partition-based differentially private synthetic data generation}


\author[label1]{Meifan Zhang}
\author[label1]{Dihang Deng}
\author[label1]{Lihua Yin$^*$}
\cortext[mycorrespondingauthor]{Corresponding author}
\ead[url]{yinlh@gzhu.edu.cn}

\affiliation[label1]{organization={Cyberspace Institute of Advanced Technology},
            addressline={Guangzhou University},
            city={Guangzhou},
            postcode={510006},
            state={},
            country={China}}

%

\begin{abstract}
  Private synthetic data sharing is preferred as it keeps the distribution and nuances of original data compared to summary statistics. The state-of-the-art methods adopt a select-measure-generate paradigm, but measuring large domain marginals still results in much error and allocating privacy budget iteratively is still difficult. To address these issues, our method employs a partition-based approach that effectively reduces errors and improves the quality of synthetic data, even with a limited privacy budget. Results from our experiments demonstrate the superiority of our method over existing approaches. The synthetic data produced using our approach exhibits improved quality and utility, making it a preferable choice for private synthetic data sharing.
  \end{abstract}


\begin{keyword}
Differential privacy, Synthetic data generation, Partition.


\end{keyword}

\end{frontmatter}



\section{Introduction}\label{sec:introduction}

Data sharing is crucial for fully utilize data, but it is vital to prioritize user privacy. To protect data privacy, many experts advocate the use of differential privacy, a well-recognized technique that involves introducing noise to statistical results~\cite{DBLP:conf/tcc/DworkMNS06, Dwork2006OurDO, Dwork2010BoostingAD}. However, summary statistics alone may not suffice for all analysis needs. Synthetic data offers a compelling alternative for data sharing as it preserves the original data's structure, distribution, and intricacies. It can seamlessly accommodate any analysis task originally intended for the source data.


There are currently two main categories of methods for generating synthetic data: (1) GAN-based methods that introduce noise to gradients, and (2) marginal-based methods that follow a \textbf{select-measure-generate} paradigm.
Previous research has indicated that marginal-based methods tend to be more accurate for generating synthetic data compared to GAN-based methods~\cite{Jordon2018PATEGANGS, DBLP:conf/nips/ChenOF20, DBLP:journals/corr/abs-2011-05537}, and recent studies have primarily focused on this approach~\cite{McKenna2022AIMAA, Cai2021DataSV, McKenna2021WinningTN}.
The basic idea of a marginal-based method is to (i) select some low-dimensional marginals, (ii) measure these marginals under DP, and (iii) generate synthetic data that aligns with the noisy measurements. Within the select-measure-generate framework of marginal-based methods, the ``generate'' step can be effectively accomplished using private-PGM~\cite{McKenna2019GraphicalmodelBE}, which learns the distribution from the noisy measurements according to Probabilistic Graphical Model (PGM). Consequently, the quality of the synthetic data heavily depends on the effectiveness of the ``select'' and ``measure'' steps.

The ``select'' step aims to find some important marginals from the workload, which can represent the entire distribution. The ``measure'' step aims to add noise to the selected marginals to satisfy DP. Each noisy marginal costs a portion of the privacy budget. Allocating a larger budget results in a reduced level of noise introduced by DP. If the budget is sufficient, we can afford using a large number of marginals to capture the characteristic of the distribution while introducing a small amount of noise to each marginal. Thus, it can approximate the distribution well. However, when the privacy budget is limited, both the ``select'' and ``measure'' strategies are non-trivial.

On one hand, selecting marginals that capture important characteristics of the original data within a limited privacy budget is a challenging task.
It is not feasible to use all marginals as measurements since each measurement needs to have noise added to it to protect privacy, and this noise incurs a cost in terms of the privacy budget. Therefore, the ``select'' part of the process aims to choose marginals that make the maximum contribution in capturing the data distribution while staying within the constraints of the limited privacy budget. Existing methods like AIM~\cite{McKenna2022AIMAA} and MWEM+PGM~\cite{DBLP:conf/nips/HardtLM12} take an iterative approach to select marginals that are poorly approximated by the current distribution. Because these marginals tends to contribute more to learining the distribution. But they may waste some privacy budget on some marginals bringing little contribution without proper privacy allocation methods. Another approach PrivSyn~\cite{Zhang2020PrivSynDP} employs a greedy selection strategy select all the marginals before measuring them. It aims to minimize the error by considering both the noise of the chosen marginals and the correlation score of the unchosen marginals. This approach ensures that the selected marginals not only have low noise but also maintain important correlation of different dimensions. However, since the value of some marginals may be covered by others, PrivSyn may waste some budget for the duplicate marginals.
Thus, we need to select marginals bringing important information of the distribution while requiring little privacy.

 \begin{figure*}[htbp]
\centering
\includegraphics[scale=0.5]{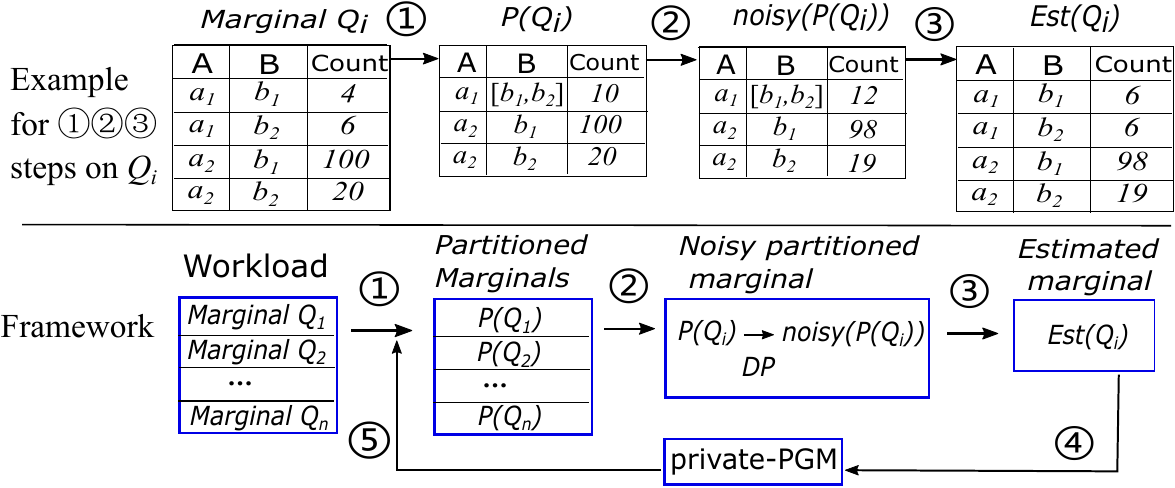}
\caption{Framework}
\label{Fig:framework}
\end{figure*}

On the other hand, determining the best strategy for allocating the limited privacy budget while selecting the marginals iteratively is also not straightforward. While it may seem logical to allocate more budget for larger marginals, this approach has its trade-offs. Since the privacy budget is limited, allocating more budget to one task reduces the budget available for others. Consequently, this can impact the number of measurements that can be taken, potentially affecting the overall accuracy of the data analysis. PrivSyn provides a method for privacy allocation for different margnials, but it only works on the selected marginals.  Allocating privacy budget iteratively is still difficult, MWEM allocate a uniform budget for each chosen marginal, while AIM starts with a fixed budget and increases it once the current budget is not enough to choose a new marginal that contribute positively to the distribution model. These methods do not guarantee that each marginal contributes sufficiently to learning the data distribution. In scenarios where the total budget is limited, even with optimal budget allocation, each marginal receives only a small portion of the budget, resulting in a significant amount of noise. Therefore, instead of focusing on finding an optimal budget allocation method, we need to fundamentally reduce the privacy budget required for each marginal when the total budget is insufficient.

To address the first challenge, for selecting a valuable marginal in each iteration, we propose a method considering the contribution per unit privacy budget as a metric for selecting an appropriate marginal, and it also ensures that the chosen marginal provides a sufficient utility of contribution. 
At first glance, it may seem simple to set the privacy budget first and then identify the marginal allocation with the highest potential contribution under that budget. However, it is important to note that the optimal selection may change if the budget is modified, and how to set the budget is not clear. Additionally, following a pre-set privacy budget does not guarantee that the chosen marginal allocation will bring sufficient contribution if all candidate marginals contribute minimally under that budget. To ensure that each noisy marginal brings sufficient contribution, we use a threshold to constrain the impact of noise on the contribution. We then compute the required privacy budget that satisfies this limit.

To address the second challenge, for reducing the error introduced by noisy measurement, we propose an adaptive partitioning approach to reduce the size of each marginal. While adding noise to each cell of the marginals is the standard method to achieve differential privacy, large margnials with a large number of cells bring significant noise error. These large marginals consume a large amount of privacy budget to ensure the untility of measurements, otherwise, they cost privacy budget without contributing much, or even negatively impacting the distribution. Merging uniformly distributed small cells into larger ones is a strategy to reduce the size of each large marginal. The count for each cell in a large partition can be estimated based on the assumption of a uniform distribution. Partition reduce the noise error but introduce the reconstruction error when the uniform distribution assumption is not true. Since our proposed selection strategy limits the utility of contribution, the only remaining task is to find the optimal partition that minimizes the required privacy budget while satisfying this constraint.

We proposed a framework shown in figure~\ref{Fig:framework}, which contains five main steps to learn a distribution that best explains the noisy measurements: \textcircled{1} compute a proper partition for each marginal in the workload according to the current distribution approximated by the private-PGM, \textcircled{2} select one partitioned marginal bringing the maximum potential contribution to reducing the approximated distribution error, \textcircled{3} add noises to the cells of the chosen partitioned marginal, estimate each value of original marginal according to the uniformly distributed assumption and include the noisy marginal in measurements, \textcircled{4} update the private-PGM model by learning the distribution well explains the measurements, \textcircled{5} adopt the updated model to select a new measurement in the next iteration.
The main concept is to decrease the size of each marginal, thereby reducing the privacy budget required. This approach also ensures that the contribution from each marginal remains valuable even after introducing errors caused by differential privacy and partitioning. By reducing the privacy cost while maintaining similar utility of each marginal, we can allocate more budget for more marginals to capture additional characteristics of the distribution. This leads to an increase in the overall utility of the synthetic data.

The main contributions of this work are summarized as follows:
\begin{itemize}
\item We propose a partition-based differentially private synthetic data generation method (PPSyn). We propose marginal partition algorithms for both one-dimensional and multidimensional marginals, which reduces the size of each marginal without bringing much reconstruction error.
\item We proved that each selected marginal brings a portion of positive contribution to improve the distribution model. 
\item We conduct extensive experiments on real-world datasets. The experimental results show that the proposed method outperforms the state-of-the-art workload-based methods on different data analysis tasks including range query and data classification.
\end{itemize}

The remainder of this paper is organized as follows. In section 2, we survey the related work for this paper. In section 3, we introduce some preliminaries. Section 4 presents the framework of the proposed method PPSyn. In section 5, the experimental results show the performance of the proposed algorithm. In section 6, we conclude the paper with future directions.

\section{Related work}
Differential privacy~\cite{Dwork2006DifferentialP}, has garnered widespread attention and has now become the de facto notion in privacy protection. Researchers have provided numerous studies and theories about DP algorithms~\cite{TCS-042, DBLP:books/sp/17/Vadhan17, DBLP:series/synthesis/2016Li}, but most of them are implemented for specific tasks~\cite{DBLP:conf/ccs/ZhangWLHC18,DBLP:conf/icde/YangCSCRL19,McKenna2018OptimizingEO}.
The difference is that synthetic data has a high degree of concern in the field of differential privacy because of its suitable for all data processing workflows designed for raw data. The field of differential privacy synthetic data has contributed a lot of theoretical research work for query release~\cite{DBLP:conf/nips/HardtLM12, Zhu2017DifferentialPA, McKenna2018OptimizingEO, Li2016DifferentialPF, McKenna2022AIMAA}.
These methods, such as MWEM~\cite{DBLP:conf/nips/HardtLM12}, although providing theoretically optimal error guarantees, have an exponential dependence on the data demension in terms of computational time overhead. Moreover, this exponential running time is known to be necessary in the worst case. Thus, researchers have made many different attempts.

\textbf{Marginal-based Methods.} These methods aim to derive an approximate distribution of the original data by using low-dimensional marginal information combined with generative modeling methods such as probabilistic graphs. They can be categorized into two types. (1) Workload-based methods: These methods use the workload derived from user-contributed query items to seek a query distribution that preserves statistical characteristics. For example, Gaboardi et al.~\cite{Gaboardi2014DualQP} introduced dual representation for optimizing data distribution, and later, Vietri et al.~\cite{Vietri2020NewOA} further analyzed and solved the optimization problem using integer program solvers. RAP~\cite{DBLP:conf/icml/AydoreBKKM0S21} utilizes a continuous relaxation of the Projection Mechanism and combines existing tools including autodifferentiation to find the best synthetic dataset. (2) Data-based methods: These methods construct low-dimensional noise marginals from data. For instance, PrivBayes~\cite{Zhang2014PrivBayesPD} estimates synthetic data based on measured marginals using Bayesian networks. It selects valuable measured marginals based on mutual information criteria to build the Bayesian networks. PGM~\cite{McKenna2019GraphicalmodelBE} effectively utilizes graphical models to reveal high-level information about a data distribution through noisy measurements. Many differential privacy synthesis data methods based on PGM have emerged since then. These methods follow a select-measure-generate paradigm but vary in their operation at different stages. PrivMRF~\cite{Cai2021DataSV}  selects reasonable constraints to simplify the graphical model's size and chooses candidate noisy measurements through Markov Random Field. It also builds a compact representation of reasonable joint probability distributions to avoid non-existent conditional assumptions in the graphical model.

\textbf{Deep Generative Model Based Methods.} To address the need for combining differential privacy (DP) with machine learning, several methods based on Generative Adversarial Networks (GAN) have been proposed~\cite{DBLP:conf/pkdd/AbayZKTS18, DBLP:journals/corr/abs-1802-06739, DBLP:journals/isci/WangCSW23,DBLP:conf/dsaa/FaisalMLW22, DBLP:journals/corr/abs-1801-01594}. The main idea behind thses methods is to add noise, thereby achieving privacy protection. However, the addition of noise can lead to gradient explosion, which requires gradient clipping to prevent instability. Unfortunately, gradient clipping can negatively impact the accuracy of the model, resulting in reduced performance. As a result, GAN-based methods currently have limitations and may not perform as well as other approaches.

The selection of the ideal set of measurement candidates is crucial for those marginal-based methods. As AIM~\cite{McKenna2022AIMAA} has done, the ideal joint probability distribution can be well constructed by continuously selecting the marginals that are poorly approximated according to the current model. However, the selected marginals often tend to be small in size, since large marginals containing more cells brings more noise and potentially distort the true distribution. 
To overcome these limitations, we propose partition methods that aim to reduce the accumulation of unnecessary noise and improve the selection of measured marginals for learning the distribution. By selecting the marginal costing unit privacy budget that offers the highest benefit, we can prioritize the selection process and improve the overall accuracy of the synthetic data generation, especially when the privacy budget is limited at a low level.


\section{Preliminaries}
\subsection{Marginals}

A marginal refers to a low-dimensional statistic computed from 
a high-dimensional data. It is essentially a table that counts the occurrences of each possible combination of values within a specific set of attributes. In other words, it provides a summary of the data by focusing on a subset of attributes and their corresponding value combinations. Marginals play a crucial role in various data analysis tasks, as they capture important statistical information about the relationships and distributions of the underlying high-dimensional data. 

\begin{definition}
(Marginal ~\cite{McKenna2019GraphicalmodelBE}). Let $r\subseteq[d]$ be a subset of attributes, $\boldsymbol{\omega_r}=\prod_{i\in r}\boldsymbol{\omega}_{i}$ and $x_{r}=(x_{i})_{i\in r}$. The marginal on $r$ is a vector $\mu \in \mathcal{R}^{n_{r}}$, indexed by domain elements $t\in \boldsymbol{\omega}_{r}$, such that each entry is a count, i.e., $\mu [t]=\sum_{x\in D}\mathbb{1}[x_{r}= t]$. We let $M_{r}: \mathcal{D}\to \mathcal{R}^{n_{r}}$ denote the function that computes the marginal on $r$, i.e., $\mu =M_{r}(D)$.
\end{definition}

We give an example of marginal as follows. 
\begin{example}
  Suppose we have a dataset with three attributes: $A$, $B$, and $C$. Attribute $A$ have two values $a_1$, $a_2$, while attribute $B$ have two values $b_1$ and $b_2$. Table $Q_i$ in Fig~\ref{Fig:framework} represents the marginal on attributes ($A$, $b$) which contains the counts for all the possible combination of values for attributes $A$ and $B$.
\end{example}

As the dimensionality of the data increases, computing all the marginals becomes computationally expensive. Therefore, we need to identify a subset of marginals that effectively capture and represent the high-dimensional data.

\subsection{Differential Privacy}
Differential privacy was originally proposed by Dwork in 2006 for the privacy leakage problem of statistical databases, which ensures that any individual in the dataset or not has little effect on the final released query results.

\begin{definition}
(Differential Privacy). A randomized mechanism $\mathcal{M}: \mathcal{D} \to \mathcal{R}$ satisfies $(\epsilon,\delta)$-differential privacy (DP) if for any neighboring datasets $D,D'\in \mathcal{D}$, $dis(D,D')=1$, and any subset of possible outputs $S\subseteq \mathcal{R}$,
\begin{equation}\label{Def:LDP}
 Pr[\mathcal{M}(D)\in S]\leq exp(\epsilon)\cdot Pr[\mathcal{M}(D')\in S]+\delta
\end{equation}
\end{definition}

Sensitivity is a key metric that determines the privacy of a mechanism, defined below.

\begin{definition}
  (Sensitivity). For $f: \mathcal{D}\to \mathbb{R}^{h}$, and all the neighboring datasets $D,D'\in \mathcal{D}$ ($dis(D,D')=1$), the $L_{2}$ sensitivity of $f$ is 
  \begin{equation}
   \Delta(f)=max_{D,D'}||f(D)-f(D')||_{2}.
  \end{equation}
\end{definition}

The sensitivity of a marginal query is commonly $1$, because one individual can only contribute a count of one to a single cell of the marginal. 

In this paper, we adopt the \textbf{Gaussian Mechanism} to add noise to the chosen margial, and use the \textbf{Exponential Mechanism} for marginal selection. We use the \textbf{zero-Concentrated Differential Privacy (zCDP)} to provide a composition of different DP mechanism. 

\begin{definition}
(Gaussian Mechanism ~\cite{DBLP:conf/tcc/BunS16}). For $f: \mathcal{D}\to \mathbb{R}^{h}$. The Gaussian Mechanism adds i.i.d. Gaussian noise with scale $\sigma \Delta(f)$ to each entry of $f(D)$.
\begin{equation}
  \mathcal{M}(D)=f(D)+\sigma \Delta (f)\mathcal{N}(0,\mathbb{I})
\end{equation}
where $\mathcal{I}$ is a $h\times h$ identity matrix.
\end{definition}

\begin{definition}
  (Exponential Mechanism). Consider a query q(x):$\mathcal{D}\to \mathbb{R}$ and specifie a scoring function $u :\mathcal{D} \times \mathcal{R} \to \mathbb{R} $, when answering it, the exponential mechanism selects a response from a finite set based on the scoring function, and the response satisfies the following probability:
\begin{equation}
  Pr[\mathcal{M}(D)=r] \propto exp \left(\frac{\epsilon}{2\Delta}\cdot q(D) \right)
\end{equation}
where $r\in \mathcal{R}$ and $\Delta$ is global sensitivity.
\end{definition}

\begin{definition}
(zero-Concentrated Differential Privacy (zCDP)). A randomized mechanism $\mathcal{M}$ is $\rho\text{-}zCDP$ if for any two neighboring datasets $D$ and $D'$, and all $\alpha \in (1,\infty)$, we have:
\begin{equation}
  D_{\alpha}(\mathcal{M}||\mathcal{M}(D'))\le \rho \cdot \alpha
\end{equation}
where $D_{\alpha}$ is the Rényi divergence of order $\alpha$.
\end{definition}

\begin{theorem}\label{Composition_zCDP}
  (Composition of zCDP). Two randomized mechanisms $\mathcal{M}_1$ and $\mathcal{M}_2$ satisfy  $\rho_1\text{-}zCDP$ and $\rho_2\text{-}zCDP$, respectively, their sequential composition $\mathcal{M}=(\mathcal{M}_1,\mathcal{M}_1)$ satisfies $(\rho_1+\rho_2)\text{-}zCDP$.
\end{theorem}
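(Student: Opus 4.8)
The plan is to work directly from the Rényi-divergence characterization of zCDP that was just defined. Fix an order $\alpha \in (1,\infty)$ and a pair of neighboring datasets $D, D'$; since both are arbitrary, it suffices to establish $D_\alpha\big(\mathcal{M}(D) \,\|\, \mathcal{M}(D')\big) \le (\rho_1 + \rho_2)\,\alpha$. Write $P$ and $Q$ for the output laws of $\mathcal{M}(D) = (\mathcal{M}_1(D), \mathcal{M}_2)$ and $\mathcal{M}(D')$, and factor each into a first-coordinate marginal times a conditional: $P(x,y) = P_1(x)\,P_2(y \mid x)$ and $Q(x,y) = Q_1(x)\,Q_2(y \mid x)$, where $P_1, Q_1$ are the laws of $\mathcal{M}_1(D), \mathcal{M}_1(D')$ and $P_2(\cdot \mid x), Q_2(\cdot \mid x)$ are the laws of the second mechanism run on $D, D'$ (possibly taking $x$ as auxiliary input). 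Assuming densities with respect to a common dominating measure keeps all the manipulations below elementary.

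The key step is to avoid any chain rule for $D_\alpha$ and instead use its exponential (moment) form, $\exp\big((\alpha-1)\,D_\alpha(P\|Q)\big) = \mathbb{E}_{(x,y)\sim P}\big[(P(x,y)/Q(x,y))^{\alpha-1}\big]$. Substituting the factorization and conditioning on $x$ via the tower property, the inner expectation over $y$ is exactly $\exp\big((\alpha-1)\,D_\alpha(P_2(\cdot\mid x)\,\|\,Q_2(\cdot\mid x))\big)$. For every fixed $x$, the distributions $P_2(\cdot\mid x)$ and $Q_2(\cdot\mid x)$ are the outputs of the $\rho_2$-zCDP mechanism $\mathcal{M}_2$ on the neighboring datasets $D$ and $D'$, so this inner quantity is at most $\exp\big((\alpha-1)\,\rho_2\,\alpha\big)$ \emph{uniformly in $x$}. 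Pulling that constant out of the outer expectation leaves $\exp\big((\alpha-1)\,\rho_2\,\alpha\big)\cdot \mathbb{E}_{x\sim P_1}\big[(P_1(x)/Q_1(x))^{\alpha-1}\big] = \exp\big((\alpha-1)\,\rho_2\,\alpha\big)\cdot\exp\big((\alpha-1)\,D_\alpha(P_1\|Q_1)\big)$, which by $\rho_1$-zCDP of $\mathcal{M}_1$ is at most $\exp\big((\alpha-1)(\rho_1+\rho_2)\,\alpha\big)$. Taking logarithms and dividing by $\alpha-1 > 0$ yields $D_\alpha(\mathcal{M}(D)\|\mathcal{M}(D')) \le (\rho_1+\rho_2)\,\alpha$, which is precisely the $(\rho_1+\rho_2)$-zCDP condition.

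If one only wishes to prove the non-adaptive sequential composition as literally stated, the two coordinates are independent, so $P = P_1 \times P_2$ and $Q = Q_1 \times Q_2$, and one can instead invoke additivity of Rényi divergence over product distributions, $D_\alpha(P_1 \times P_2 \,\|\, Q_1 \times Q_2) = D_\alpha(P_1\|Q_1) + D_\alpha(P_2\|Q_2) \le \rho_1\,\alpha + \rho_2\,\alpha$, which collapses the argument to a single line.

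I expect the only real subtlety to be the adaptivity bookkeeping in the general argument: one must check that the $\rho_2$-zCDP guarantee of $\mathcal{M}_2$ genuinely applies to the pair $\big(\mathcal{M}_2(D,x), \mathcal{M}_2(D',x)\big)$ for \emph{every} value $x$ of the first output, so that the bound on the inner expectation is uniform and can be factored out of the outer expectation. The accompanying measure-theoretic points — existence of regular conditional distributions and the use of Fubini/tower property — are routine under the common-dominating-measure assumption and are standard for composition proofs of this kind.
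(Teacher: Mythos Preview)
The paper does not supply a proof of this theorem at all: it is listed in the Preliminaries as a known fact about zCDP (implicitly imported from \cite{DBLP:conf/tcc/BunS16}) and is used only as a black box for privacy accounting. So there is no in-paper argument to compare against.

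Your argument is correct and is essentially the standard Bun--Steinke proof: expand $\exp\big((\alpha-1)D_\alpha(P\|Q)\big)$ as a moment, condition on the first output, bound the inner conditional Rényi divergence by $\rho_2\alpha$ uniformly in $x$, factor it out, and then apply $\rho_1$-zCDP to the remaining outer moment. Your remark that the non-adaptive case collapses to tensorization of Rényi divergence for product measures is also right, and your identification of the only real subtlety (uniformity of the $\rho_2$-bound over the conditioning value so it can be pulled outside the expectation) is on point. Nothing is missing.
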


\begin{proposition}
  (zCDP of Gaussian Mechanism). The Gaussian Mechanism with scale $\sigma$ satisfies $\frac{1}{2\sigma^2}\text{-}zCDP$.
\end{proposition}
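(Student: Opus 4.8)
The plan is to compute the R\'enyi divergence of order $\alpha$ between the output distributions $\mathcal{M}(D)$ and $\mathcal{M}(D')$ in closed form, exploiting the fact that both are multivariate Gaussians sharing a common covariance matrix, with means differing only by $f(D)-f(D')$.

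First I would fix neighboring datasets $D,D'$ with $dis(D,D')=1$ and set $v = f(D)-f(D')$, so that $\|v\|_2 \le \Delta(f)$ by the definition of the $L_2$ sensitivity. Writing $\Sigma = \sigma^2\Delta(f)^2\,\mathbb{I}$, the mechanism in Definition~5 yields $\mathcal{M}(D)\sim\mathcal{N}(f(D),\Sigma)$ and $\mathcal{M}(D')\sim\mathcal{N}(f(D'),\Sigma)$, since adding $\sigma\Delta(f)\,\mathcal{N}(0,\mathbb{I})$ produces noise with covariance $\Sigma$.

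Next I would establish the identity $D_{\alpha}\big(\mathcal{N}(\mu_1,\Sigma)\,\|\,\mathcal{N}(\mu_2,\Sigma)\big) = \tfrac{\alpha}{2}(\mu_1-\mu_2)^{\top}\Sigma^{-1}(\mu_1-\mu_2)$. This follows by writing out the two Gaussian densities $p_1,p_2$, noting that in the product $p_1^{\alpha}p_2^{1-\alpha}$ the common quadratic form makes the exponent quadratic in the variable, so the integrand is an unnormalized Gaussian; completing the square and evaluating the resulting Gaussian integral gives the stated formula. Equivalently, one may rotate coordinates so that $v$ lies along a single axis and reduce to the one-dimensional computation, which is classical. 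This completing-the-square bookkeeping is the only genuinely computational step, and it is where I would be most careful about signs and normalizing constants.

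Finally I would substitute $\Sigma^{-1} = \tfrac{1}{\sigma^2\Delta(f)^2}\mathbb{I}$ to obtain $D_{\alpha}\big(\mathcal{M}(D)\,\|\,\mathcal{M}(D')\big) = \tfrac{\alpha}{2}\cdot\tfrac{\|v\|_2^2}{\sigma^2\Delta(f)^2} \le \tfrac{\alpha}{2}\cdot\tfrac{\Delta(f)^2}{\sigma^2\Delta(f)^2} = \tfrac{\alpha}{2\sigma^2}$, and by the symmetry of the argument the same bound holds with $D$ and $D'$ interchanged. Since this holds for every $\alpha\in(1,\infty)$, the definition of zCDP is satisfied with $\rho = \tfrac{1}{2\sigma^2}$, proving the claim. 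I would add the remark that $\Delta(f)$ cancels entirely, which is precisely why the mechanism is calibrated with noise scale $\sigma\Delta(f)$.
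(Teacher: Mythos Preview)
Your argument is correct and is the standard derivation (essentially the one in Bun and Steinke's zCDP paper): compute the R\'enyi divergence between two Gaussians with common covariance in closed form, then bound using the $L_2$ sensitivity. Note, however, that the paper does not actually prove this proposition; it is stated as a known result imported from the zCDP literature and used as a black box for privacy accounting. So there is no ``paper's own proof'' to compare against --- your derivation simply fills in what the paper cites.
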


\begin{proposition}
  (zCDP of Exponential Mechanism). The Exponential Mechanism with privacy budget $\epsilon$ satisfies $\frac{\epsilon^2}{8}\text{-}zCDP$.
\end{proposition}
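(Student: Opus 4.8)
The plan is to route through the \emph{bounded range} property of the exponential mechanism rather than through the cruder bridge ``pure $\epsilon$-DP implies $\tfrac12\epsilon^2$-zCDP'', which would only yield the weaker constant $\rho=\epsilon^2/2$. So the argument has two parts: (i) show the privacy loss of the exponential mechanism varies by at most $\epsilon$ across outputs, and (ii) turn a range bound on the privacy loss into the Rényi-divergence bound defining $\tfrac{\epsilon^2}{8}$-zCDP.

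First I would analyze the privacy loss directly. Fix neighboring databases $D,D'$ and set $P=\mathcal{M}(D)$, $Q=\mathcal{M}(D')$, with privacy-loss function $L(r)=\ln\frac{\Pr[\mathcal{M}(D)=r]}{\Pr[\mathcal{M}(D')=r]}$. Writing the mechanism as $\Pr[\mathcal{M}(D)=r]=\exp\!\big(\tfrac{\epsilon}{2\Delta}u(D,r)\big)/Z_D$ with $Z_D=\sum_{r'}\exp\!\big(\tfrac{\epsilon}{2\Delta}u(D,r')\big)$, we get $L(r)=\tfrac{\epsilon}{2\Delta}\big(u(D,r)-u(D',r)\big)+\ln(Z_{D'}/Z_D)$. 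The key observation is that in the difference $L(r_1)-L(r_2)$ the partition-function term cancels, leaving $\tfrac{\epsilon}{2\Delta}\big[(u(D,r_1)-u(D',r_1))-(u(D,r_2)-u(D',r_2))\big]$; since each utility gap is bounded in absolute value by the sensitivity $\Delta$, this is at most $\tfrac{\epsilon}{2\Delta}\cdot 2\Delta=\epsilon$. Hence $\sup_r L(r)-\inf_r L(r)\le\epsilon$ for every neighboring pair.

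Next I would convert this into the required bound. By the definition of zCDP it suffices to prove $D_\alpha(\mathcal{M}(D)\|\mathcal{M}(D'))\le\tfrac{\epsilon^2}{8}\alpha$ for every $\alpha>1$, and since $D_\alpha(P\|Q)=\tfrac{1}{\alpha-1}\ln\mathbb{E}_{r\sim P}[e^{(\alpha-1)L(r)}]$ the task reduces to controlling the moment generating function of $L$ under $P$. Apply Hoeffding's lemma: a random variable supported on an interval of length $\epsilon$ satisfies $\mathbb{E}[e^{s(L-\mathbb{E}L)}]\le e^{s^2\epsilon^2/8}$, which gives $D_\alpha(P\|Q)\le D_{\mathrm{KL}}(P\|Q)+\tfrac{(\alpha-1)\epsilon^2}{8}$. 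It then remains to bound the leading term by $\tfrac{\epsilon^2}{8}$: using the identity $\mathbb{E}_P[e^{-L}]=\sum_r P(r)\tfrac{Q(r)}{P(r)}=1$ and applying Hoeffding's lemma once more to $-L$ under $P$ (also of range $\epsilon$, with mean $-D_{\mathrm{KL}}(P\|Q)$) at $s=1$, one gets $1\le e^{-D_{\mathrm{KL}}(P\|Q)+\epsilon^2/8}$, hence $D_{\mathrm{KL}}(P\|Q)\le\tfrac{\epsilon^2}{8}$. Combining, $D_\alpha(P\|Q)\le\tfrac{\epsilon^2}{8}+\tfrac{(\alpha-1)\epsilon^2}{8}=\tfrac{\epsilon^2}{8}\alpha$, which is exactly $\rho\alpha$ with $\rho=\epsilon^2/8$.

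The main obstacle is obtaining the sharp constant $\tfrac18$ rather than $\tfrac12$: this is precisely what the bounded-range observation buys, and it is what forces the two-sided use of Hoeffding's lemma, once to bound the MGF of the privacy loss and once, via the $\mathbb{E}_P[e^{-L}]=1$ identity, to bound the KL term. One must also be careful that the range bound on $L$ holds uniformly over all neighboring pairs and all pairs of outputs, which is why the cancellation of the partition functions in the first step is essential and not merely incidental.
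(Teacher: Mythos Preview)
Your argument is correct and is essentially the standard bounded-range proof due to Cesar and Rogers (building on Bun--Steinke). Note, however, that the paper itself does \emph{not} prove this proposition: it is listed in the preliminaries as a known fact (alongside the Gaussian-mechanism zCDP bound and the composition theorem) and invoked without argument, so there is no ``paper's own proof'' to compare against. Your write-up is therefore supplying a proof where the paper merely cites one, and the route you take---show the privacy loss has range $\epsilon$ via cancellation of the partition functions, then apply Hoeffding's lemma twice (once for the MGF, once via the identity $\mathbb{E}_P[e^{-L}]=1$ to control the KL term)---is exactly the sharp argument needed to get the constant $1/8$ rather than the cruder $1/2$ from the generic pure-DP-to-zCDP conversion.
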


By leveraging these two propositions, we can convert the privacy budget of mechanisms such as the Gaussian Mechanism and the Exponential Mechanism into that of the zCDP. This allows us to effectively manage the composition of privacy guarantees using the theorem~\ref{Composition_zCDP}.

\subsection{Private-PGM}
We just use the private-PGM as a black box to approximate the jointly distribution $Pr[A_1, A_2, ..., A_n]$ via constructing junction tree from a list of noisy marginals $\tilde{W}=\{\tilde{w_1}, \tilde{w_2},... ,\tilde{w_n}\}$.


The core of private-PGM can be seen as solving an optimization problem:
\begin{equation}
  \hat{p}\in \mathop{\arg\min}\limits_{p\in S}\sum\limits_{i=1}^{k}\frac{1}{\sigma_{i}} ||M_{w_{i}}(p)-\tilde{w_{i}}||_{2}^{2}
\end{equation}
, where $\mathcal{S}=\{p|p(x)\ge 0 \text{ and } \sum_{x\in \mathcal{X}}p(x)=n\}$ is the set of probability distributions over the full domian, and $\tilde{w_{i}}$ is the noisy result of marginal $w_i$.



\section{Partition-based private data synthesis}

In order to accurately estimate a distribution that closely resembles the true distribution based on noisy marginals, it is crucial to carefully select the marginals that provide significant value while requiring minimal privacy budget. 
We have developed a new method called PPSyn, which is a partition-based private data synthesis approach. Its primary purpose is to decrease the privacy budget needed for each marginal, based on partition techniques. Furthermore we have developed a strategy to allocate the privacy budget based on the positive impact of enhancing the accuracy of estimated distribution per unit budget for each selected marginal. This allows us to make full use of the limited privacy budget, ensuring that it is allocated in the most effective manner.

In Section~\ref{Sec:FrameworkOfSyn}, we present the framework of our proposed method PPSyn.   In Section~\ref{Sec:Partition}, we introduce two partition methods specifically designed for one-dimensional and multidimensional marginals, respectively. 

\subsection{The framework of partition-based private data synthesis (PPSyn)}\label{Sec:FrameworkOfSyn}

Given the limited privacy budget, our method focuses on maximizing information the distribution model learned while minimizing privacy expenditure. To achieve this objective, we employ two steps.
First, we propose partition methods to decrease the number of cells in each marginal. By reducing the number of cells, we effectively reduce the amount of DP noise. Secondly, we prioritize the selection of marginals that provide the highest contribution to improve the model per unit  budget. This ensures that our method maximizes the information gained while making efficient use of the limited privacy budget available.

The pseudo-code for our framework is depicted in Algorithm~\ref{Algorithm:PrivSynLB}. This algorithm allocates $0.1\rho$ of the privacy budget to learn the distribution of multidimensional data $\hat{p}$ using noisy one-way marginals (line 1). The privacy budget allocated for each marginal is determined by PrivSyn~\cite{Zhang2020PrivSynDP}. Subsequently, the algorithm trains the PrivatePGM model based on the noisy one-way marginals (line 2-3). We also allocate $0.1\rho$ for marginal selection. we adopt exponential mechanism to protect the privacy for selectiong a new measurements from the workload, and the privacy budget for each iteration is $\rho_{exp}=0.1\rho/T$ (line 4). 

\begin{algorithm}
  \caption{Partition-based private data synthesis(PPSyn)}\label{Algorithm:PrivSynLB}
  {\bf Input:} Private Dataset $D$, Marginals $W$, Privacy budget $\rho$\\
  {\bf Output:} Synthetic dataset  $\hat{D}$\\
  {\bf Hyper-Parameters:} Number of rounds $T$
  \begin{algorithmic}[1]
  \State Initial the distribution $\hat{p}$ by training the PrivatePGM according to one-way marginals with 0.1$\rho$, i.e., $\rho_{used}\leftarrow0.1\rho$.
   The privacy budget for a one-way marginal $w_i$ in $W$ is $\rho_i=\frac{c_i^{2/3}}{\sum_{j\in{one}}c_j^{2/3}}\cdot0.1\rho$.  
  \State $measures\leftarrow$ noisy one-way marginals
  \State $\hat{p}\leftarrow PrivatePGM(measures)$
  \State $\rho_{exp}\leftarrow \frac{0.1\rho}{T}$
  \While {$\rho_{used}<\rho$}
      \State select: $m, P_m,\rho_m\leftarrow PartSele(W,\hat{p},\rho-\rho_{used},\rho_{exp})$ 
      \State add noise: $\tilde{P_m}\leftarrow \tilde{P_m}+ Noise(\rho_m)$ 
      \State measure: $measures\leftarrow measures\cup Est(m, \tilde{P_m})$
      \State estimate: $\hat{p}\leftarrow PrivatePGM(measures)$ 
      \State update: $\rho_{used}\leftarrow\rho_{used}+\rho_m+\rho_{exp}$
      \EndWhile
  \State generate synthetic data $\hat{D}$ with $\hat{p}$
  \State \textbf{return: $\hat{D}$}
  \end{algorithmic}
\end{algorithm}
In an iterative manner, the algorithm selects one high-quality marginal and include the noisy marginal in the measurements to update the PrivatePGM model (line 5-10).
In each iteration, the algorithm employs partition-based marginal selection methods (to be discussed in section~\ref{Sec:Partition} with more details) to identify an appropriate partition for a specific marginal from the workload. It reduces the number of cells of each marginal using partition methods and returns the selected marginal $m$ along with its corresponding partition $P_m$ and the required budget $\rho_m$. The required budget makes sure that the selected marginal brings sufficient contribution to improve the model.
It then adds noise to the selected partition using the Gaussian Mechanism, providing estimations  $Est(m, \tilde{P_m})$  for each cell of the selected marginal $m$ based on the noisy partition $\tilde{P_m}$ under the assumption of uniform distribution. Finally, the algorithm generates synthetic data according to the estimated distribution (line 12).


\subsection{Partition-based Marginal Selection}\label{Sec:Partition}


In marginal-based synthetic data generation, multiple marginals are utilized as measurements to estimate the distribution of synthetic data. Each marginal consumes a certain privacy budget, and the amount of noise introduced according to DP is inversely related to the budget but directly related to the size of the marginal. Therefore, in order to capture a noisy marginal with high utility, a larger budget needs to be allocated for larger marginals. However, due to the limited total budget, allocating more to large marginals would result in reducing the number of measurements available, potentially leading to insufficient data to accurately approximating the true distribution. On the other hand, excluding large marginals may result in the loss of important distribution information.

We observe that most of the data is concentrated in a small region, meaning that a large number of cells in large marginals have small values that are not crucial for estimating the distribution but contribute to introducing unnecessary noise. To address this, we propose partition-based marginal selection methods to reduce the size of each marginal based on a partition. This allows us to add noise to each interval of the partition instead of each cell of the marginal, effectively reducing the privacy budget required while maintaining similar utility. With this approach, we can capture important information from large marginals without incurring excessive privacy budget expenditure.

To better understand the impact of noise, we provide some fundamental definitions that describe the error associated with the noisy marginals.

\begin{definition}
 \textbf{Noisy Marginal Error.} We consider the distribution of a marginal over $n$ elements. The noisy marginal error is the l1-norm distance between the noisy marginal and the true marginal $\left\|\tilde{p}-q\right\|_1$, where $q(i)$ is the true count for element $i$ of the marginal and $\tilde{p}(i)$ is the noisy count after adding noise according to differential privacy methods such as Gaussian Mechanism.
\end{definition}

\begin{definition}\label{Def:NoiseMarginalWithoutPartition}
 \textbf{Noisy Marginal without Partition.} Given a marginal $w$ whose size is $n$, the noisy marginal without partition is the result of adding noise to each cell in the marginal according to Gaussian Mechanism. The estimation for a noisy marginal without partition is $\tilde{p}$, and $\tilde{p}(i)=q(i)+\mathcal{N}(0,\sigma^2)$.
\end{definition}

We use the following theorem to show the impact of adding noise to a marginal.

\begin{theorem}\label{Theo:NoiseMarginalWithoutPartition}
  Given a marginal $w$ whose size is $n$, the l1-error of the noisy distribution $\tilde{p}$ from the true distribution $q$ is $\mathbb{E}[\left\|\tilde{p}-q\right\|_1]=\sqrt{2/\pi}n\sigma$, where $n$ is the number of cells of $w$.
\end{theorem}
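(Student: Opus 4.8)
The plan is to compute the expected $\ell_1$ error by linearity of expectation, reducing it to a sum over the $n$ cells of the expected absolute value of a single Gaussian noise term. First I would write $\mathbb{E}[\|\tilde{p}-q\|_1] = \mathbb{E}\bigl[\sum_{i=1}^{n} |\tilde{p}(i)-q(i)|\bigr]$ and then invoke Definition~\ref{Def:NoiseMarginalWithoutPartition}, which gives $\tilde{p}(i)-q(i) = \mathcal{N}(0,\sigma^2)$ independently for each cell $i$. By linearity of expectation this equals $\sum_{i=1}^{n} \mathbb{E}[|Z_i|]$ where each $Z_i \sim \mathcal{N}(0,\sigma^2)$.

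The next step is the one genuine computation: the mean of a half-normal (folded normal) random variable. For $Z \sim \mathcal{N}(0,\sigma^2)$ one has $\mathbb{E}[|Z|] = \int_{-\infty}^{\infty} |z| \frac{1}{\sqrt{2\pi}\,\sigma} e^{-z^2/(2\sigma^2)}\,dz = \frac{2}{\sqrt{2\pi}\,\sigma}\int_{0}^{\infty} z\, e^{-z^2/(2\sigma^2)}\,dz$, and the substitution $u = z^2/(2\sigma^2)$ collapses the integral to $\sigma^2$, yielding $\mathbb{E}[|Z|] = \sqrt{2/\pi}\,\sigma$. Summing this identical value over all $n$ cells gives $\mathbb{E}[\|\tilde{p}-q\|_1] = \sqrt{2/\pi}\,n\sigma$, which is exactly the claimed bound.

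There is essentially no hard part here — the result is a direct consequence of linearity of expectation plus the standard folded-normal mean; the only thing to be careful about is that the noise on distinct cells need not be assumed independent, since linearity of expectation holds regardless, so the proof does not even require the i.i.d.\ structure (only the per-cell marginal distribution from Definition~\ref{Def:NoiseMarginalWithoutPartition}). If I wanted to be thorough I would note that this makes the statement robust, and that the same half-normal computation will be reused when analyzing the partitioned case in the sequel, where the per-interval noise scale changes but the structure of the argument does not.
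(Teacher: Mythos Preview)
Your proposal is correct and follows essentially the same approach as the paper: both use linearity of expectation to reduce to the per-cell expected absolute Gaussian error $\mathbb{E}[|Z|]=\sqrt{2/\pi}\,\sigma$ and then sum over the $n$ cells. Your version is in fact more detailed, since the paper simply quotes the half-normal mean without carrying out the integral.
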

\begin{proof}
  Since $error(i)=\tilde{p}(i)-q(i)$, $error(i)\sim\mathcal{N}(0,\sigma^2)$, the expectation error for the $i$th cell of the marginal is $\mathbb{E}[|\tilde{p}(i)-q(i)|]=\sqrt{2/\pi}\sigma$. Therefore, $\mathbb{E}[\left\|\tilde{p}-q\right\|_1]=\sum_{i\in[n]}|\tilde{p}(i)-q(i)|=\sqrt{2/\pi}n\sigma$.
\end{proof}

According to the above theorem, the noise error depends on the size of each marginal $n$ (the number of cells of each marginal). Adding noise to each cell in the marginal directly protect the privacy but can lead to significant errors when dealing with a marginal including a large number of elements. In order to reduce the noise error, partition is a strategy. If we divide the data into intervals, we can add noise to each interval instead of each element and estimate the elements within based on the assumption of uniform distribution. Partitioning helps reduce noise error but introduces reconstruction error. We use the follow definition to show the impact of noise on the marginal with partition.

\begin{definition}\label{Def:NoiseMarginalWithPartition}
 \textbf{Noisy Marginal with Partition.} The noisy marginal with partition separates the marginal into several disjoint intervals $P=\{I_1, I_2, ..., I_l\}$, and the count for each interval $I$ is the sum of all the element counts in the interval $q(I)=\sum_{i\in I}q(i)$. Gaussian noise is added to each $p(I)$. The estimation for a noisy marginal with partition is $\tilde{p}$, and $\tilde{p}_{i\in I}(i)=\frac{q(I)+\mathcal{N}(0,\sigma^2)}{|I|}$.
\end{definition}

\begin{theorem}\label{Theo:NoiseMarginalWithPartition}
  Given a marginal $w$ whose size is $n$, the l1-error of the estimated distribution $\tilde{p}$ based on the noisy partition $\tilde{P}$ is $\mathbb{E}[\left\|\tilde{p}-q\right\|_1]\le\left\|q(P)-q\right\|_1+\sqrt{2/\pi}\cdot|P|\cdot\sigma$, where $|P|$ is the number of intervals of $P$.
\end{theorem}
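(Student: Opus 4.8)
The plan is to split the error into a deterministic \emph{reconstruction} term and a stochastic \emph{noise} term via the triangle inequality, then bound each separately. Write $q(P)\in\mathbb{R}^{n}$ for the piecewise-constant reconstruction of the true marginal induced by $P$, i.e. $q(P)(i)=q(I)/|I|$ for every cell $i$ belonging to interval $I\in P$; this is exactly the estimate one would obtain from the \emph{exact} interval counts under the uniform-distribution assumption. Then for the noisy estimate $\tilde{p}$ of Definition~\ref{Def:NoiseMarginalWithPartition} we have, cellwise and hence in $\ell_1$,
\begin{equation*}
  \left\|\tilde{p}-q\right\|_1 \;\le\; \left\|\tilde{p}-q(P)\right\|_1 + \left\|q(P)-q\right\|_1 .
\end{equation*}
The second summand is non-random, so after taking expectations it suffices to show that $\mathbb{E}\!\left[\left\|\tilde{p}-q(P)\right\|_1\right]=\sqrt{2/\pi}\,|P|\,\sigma$.

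For the noise term, let $Z_I\sim\mathcal{N}(0,\sigma^2)$ denote the single Gaussian draw added to the count of interval $I$, so that $\tilde{p}(i)-q(P)(i)=Z_I/|I|$ for each $i\in I$. The key observation is that the shared-within-interval noise telescopes: summing $|Z_I|/|I|$ over the $|I|$ cells of $I$ contributes exactly $|Z_I|$, independently of the interval's size, so
\begin{equation*}
  \left\|\tilde{p}-q(P)\right\|_1 \;=\; \sum_{I\in P}\sum_{i\in I}\frac{|Z_I|}{|I|} \;=\; \sum_{I\in P}|Z_I| .
\end{equation*}
Using $\mathbb{E}[|Z_I|]=\sqrt{2/\pi}\,\sigma$ for a centered Gaussian (the same fact invoked in Theorem~\ref{Theo:NoiseMarginalWithoutPartition}) together with linearity of expectation gives $\mathbb{E}\!\left[\left\|\tilde{p}-q(P)\right\|_1\right]=\sqrt{2/\pi}\,|P|\,\sigma$, which combined with the inequality above yields the claim.

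There is no real analytic difficulty here; the only thing to get right is the bookkeeping in the cancellation above --- one must notice that the per-cell error $Z_I/|I|$ is shared by exactly $|I|$ cells, so the factor $|I|$ cancels and the total number of Gaussian contributions is $|P|$ (the number of intervals) rather than $n$ (the number of cells). This is precisely where the partition buys a reduction of the noise error from $\sqrt{2/\pi}\,n\,\sigma$ in Theorem~\ref{Theo:NoiseMarginalWithoutPartition} down to $\sqrt{2/\pi}\,|P|\,\sigma$, at the cost of the reconstruction term $\left\|q(P)-q\right\|_1$, which vanishes exactly when $q$ is constant on every interval of $P$. I would also note that only linearity of expectation is used in the noise computation, so the $Z_I$ need not be assumed independent, and that the inequality in the statement is tight apart from the single triangle-inequality step.
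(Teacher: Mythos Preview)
Your proof is correct and follows essentially the same approach as the paper: a triangle-inequality split into the deterministic reconstruction term $\|q(P)-q\|_1$ and the stochastic noise term, followed by the observation that the $|I|$ copies of $|Z_I|/|I|$ within each interval collapse to a single $|Z_I|$ so that the expected noise is $\sqrt{2/\pi}\,|P|\,\sigma$. Your write-up is in fact more explicit about this cancellation than the paper's, but the argument is the same.
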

\begin{proof}
   For each cell $i$ in the original marginal, $error(i)=\tilde{p}(i)-q(i)$, where $\tilde{p}(i)$ is estimated value for cell $i$ according to the noisy partition. 
   $\mathbb{E}[\left\|\tilde{p}-q\right\|_1]
   =\sum_{i\in [n]}|\tilde{p}(i)-q(i)|
   \le \sum_{I_j\in P}\sum_{i\in I_j}|\frac{q(I_j)}{|I_j|}-q(i)|+\sum_{I_j\in P}|\tilde{p}(I_j)-q(I_j)|$.

   According to theorem~\ref{Theo:NoiseMarginalWithoutPartition}, $E[\sum_{I_j\in P}|\tilde{p}(I_j)-q(I_j)|]=\sqrt{2/\pi}\cdot|P|\cdot\sigma$. The reconstruction error is $\sum_{I_j\in P}\sum_{i\in I_j}|\frac{q(I_j)}{|I_j|}-q(i)|=\left\|q(P)-q\right\|_1$, therefore, $\mathbb{E}[\left\|\tilde{p}-q\right\|_1]\le\left\|q(P)-q\right\|_1+\sqrt{2/\pi}\cdot|P|\cdot\sigma$.
\end{proof}

Definitions~\ref{Def:NoiseMarginalWithoutPartition} and \ref{Def:NoiseMarginalWithPartition} show the difference of protecting the privacy of the marginal with and without partition, respectively. Adding noise to the count of each element in the marginal directly introduce too much noise error. Partitioning reduces noise error but introduces reconstruction error. We can learn from theorems~\ref{Theo:NoiseMarginalWithoutPartition} and ~\ref{Theo:NoiseMarginalWithPartition} that the partition can reduce the total error as long as $\left\|q(P)-q\right\|_1<\sqrt{2/\pi}\cdot(n-|P|)\cdot\sigma$, i.e., the reduction in noise error outweighs the increase in reconstruction error.

\subsubsection{The framework of marginal selection}
In this section, we present the partition-based marginal selection method that we propose. Our objective is to identify the marginal that contributes the most while utilizing the privacy budget efficiently. The error of each marginal is influenced by two factors: the noise error introduced by DP and the reconstruction error introduced by the partitioning process. Instead of pre-determining the privacy budget for each marginal, we dynamically allocate the privacy budget based on its corresponding partition. Finally, we select the measurement with the highest score in terms of the contribution while costing per unit budget.

\begin{algorithm}
\caption{Partition-based Marginal Selection (PartSele)}\label{Algorithm:Sele}
{\bf Input:} Marginals $W$, distribution $\hat{p}$, budget for exponential mechanism $\rho_{exp}$\\
{\bf Output:} Selected marginal $m$, Partition $P_m$, required budget $\rho_m$\\
{\bf Hyper-Parameters}: Error impact $\eta$
\begin{algorithmic}[1]
\For {each $w$ in $W$}
    \State $Contr_w\leftarrow \left\|M_{w}(D)-M_{w}(\hat{p}) \right\|_1$
    \State $P_w,\rho_w \leftarrow Partition(w,Contr_w,\eta)$
    \State $score[w]=\frac{Contr_w}{\rho_w}$
\EndFor
\State $m\leftarrow ExponentialMechanism(W,score,\rho_{exp})$
\State \textbf{return: $m$, $P_m$, $\rho_m$}
\end{algorithmic}
\end{algorithm}

The pseudo-code of the partition-based marginal selection method is shown in Algorithm~\ref{Algorithm:Sele}. In each iteration, we compute the $l_1$-error between the true marginal $w$ on the dataset $D$ and the marginal estimated according to the current distribution $\hat{p}$ (line 2). We regard this error as the potential contribution of each marginal for the improvement of the estimated distribution $\hat{p}$.
Then we compute a proper partition $P_w$ along with the required privacy budget $\rho_w$  according to the marginal $w$ and the contribution $Contr_w$ (line 3). The parameter $\eta$ regards the impact of noise on the contribution of this marginal. We define a score function for each marginal to measure its value for the distribution, and the score is positively correlated with its contribution brought by each unit of budget expenditure(line 4). After computing the score for each marginal, the algorithm selects one marginal according to the \textit{ExponentialMechanism} (line 6).

\subsubsection{Marginal partition}
We propose two partition methods for one-dimensional and multi-dimensional marginal, respectively. The partition methods aim to reduce noise error by merging some uniformly distributed cells into one larger interval. Partitioning reduces noise error but introduces reconstruction error. In order to guarantee the positive contribution that each candidate marginal brings, we limit the impact of total noise to the contribution of one marginal and find a proper partition that satisfies this constraint for each marginal.

\textbf{\underline{One-dim marginal partition}}. For each one-dim marginal, we begin by treating each cell of the true marginal as an interval. We then iteratively merge two disjoint intervals that result in the minimum reconstruction error.

\begin{algorithm}
  \caption{One-dim marginal partition}\label{Algorithm:1D-Partition}
  {\bf Input:} 1-dim marginal $w$, Contribution $Contr$, Noise impact $\eta$\\
  {\bf Output:} Partition $P$, privacy budget $\rho$.
  \begin{algorithmic}[1]
  \State Sort the cells in $w$ in ascending order of the counts $\{(i_1,c_1),...,(i_n,c_n)\}$, $n$ is the number of cells in $w$.
  \State Initial the partition $P_0$ and regarding each cell as an interval.
  \State $t=0$
  \State $\rho_t\leftarrow \frac{|P_t|^2}{\pi\cdot(\eta\cdot Contr)^2}$
  \While {1}
      \State $c \leftarrow \mathop{\arg\min}\limits_{u\in\{1,2,...,|P_t|-1\}} MergeError(I_{u},I_{u+1})$
      \State $P_{t+1} \leftarrow$  merge the intervals $I_c$ and $I_{c+1}$ in $P_{t}$.
      \State $RE\leftarrow \left\|M_{w}(D)-M_{w}(P_{t+1}) \right\|_1$ 
      \State $\rho_{t+1}\leftarrow \frac{|P_{t+1}|^2}{\pi\cdot(\eta\cdot Contr-RE)^2}$
      \If {$\rho_{t+1}>\rho_{t}$}
          \State break;
      \EndIf
      \State $t\leftarrow t+1$
  \EndWhile
  \State \textbf{return: $P_t$, $\rho_t$}
  \end{algorithmic}
  \end{algorithm}

The pseudo-code for the one-dimensional marginal partition method is shown in Algorithm~\ref{Algorithm:1D-Partition}. Given a one-way marginal $w$, its potential contribution $Contr$, and the error impact parameter $\eta$, the algorithm begins by treating each cell in the marginal as an interval within the partition (line 1-2). The initial required privacy budget $\rho_0$ is computed based on $\eta\cdot Contr$ and $|P_0|=n$ (line 4). This setting ensures that the total error, including both the noise error and the reconstruction error, does not exceed $\eta\cdot Contr$. The algorithm then continues merging interval pairs that result in the minimum reconstruction error (line 6), gradually reducing the size of the partition (line 7). At each iteration $t$, the algorithm updates the reconstruction $RE$ (line 8) and the required privacy budget $rho_t$ for the current partition $P_t$ (line 9). We will use an example to demonstrate that the required privacy budget initially decreases as the size of the partition  $|P_t|$ increases, but then starts to increase again. Since our goal is to find the minimum required privacy budget while maintainting the its contribution to improve the distribution approximation, the algorithm terminates when the required budget begins to increase (line 10-12).


To explain the reason why the algorithm terminates when the budget begins to increase. We illustrate the impact of partition on the required privacy budget of each marginal using the following example.

\begin{figure}[htbp]
  \centering
  \includegraphics[scale=0.5]{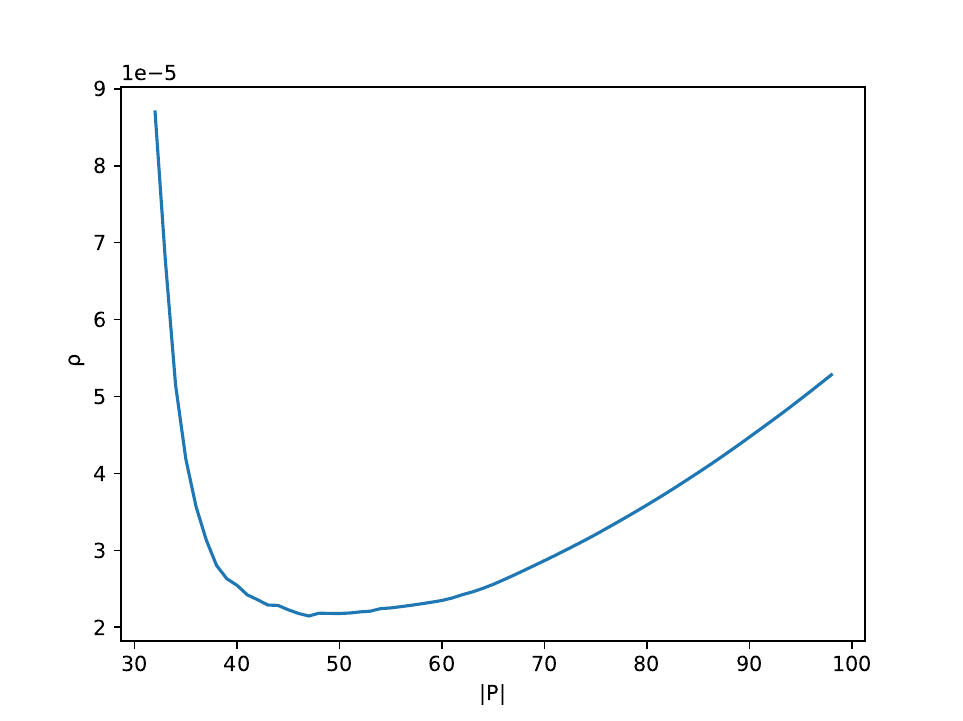}
  \caption{Impact of $|P|$ on the required privacy budget $\rho$}
  \label{Fig:example}
  \end{figure}

\begin{example}
Let's consider a one-way marginal on the attribute ``hours-per-week'' that includes 99 cells. As shown in Figure~\ref{Fig:example}, the required privacy budget ($\rho$) decreases as the size of the partition ($|P|$) decreases. However, once it reaches the minimum value, it starts to increase again. Therefore, by treating each cell as an interval within the partition and continuously merging the intervals that result in the minimum reconstruction error, we can find the optimal value that minimizes the required privacy budget.
\end{example}

We use the following theorem to prove that the error of the marginal estimated according to the noisy partition returned by Algorithm~\ref{Algorithm:1D-Partition} is limited.

\begin{theorem}\label{theorem:errorlimit}
  Suppose $P$ and $\rho$ is the partition and privacy budget for a marginal $w$ returned by algorithm~\ref{Algorithm:1D-Partition}. Thus,
   $\left\|M_{w}(D)-M_{w}(\tilde{P}) \right\|_1\leq \eta\cdot Contr$, where $M_{w}(D)$ is the true marginal on data $D$, and $M_{w}(\tilde{P})$ is the marginal estimated according to the noisy partition $P$.
\end{theorem}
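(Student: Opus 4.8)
The plan is to reduce the statement to Theorem~\ref{Theo:NoiseMarginalWithPartition} by unwinding how Algorithm~\ref{Algorithm:1D-Partition} sets the privacy budget $\rho$ for the partition $P$ it returns, converting that zCDP budget into the Gaussian noise scale $\sigma$ that is actually used when the marginal is measured, and checking that the constants cancel exactly.

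First I would record the invariant maintained by Algorithm~\ref{Algorithm:1D-Partition}: every time it assigns a budget (line~4 when $t=0$, line~9 otherwise) it uses $\rho_t = |P_t|^2 / \big(\pi (\eta\cdot Contr - RE_t)^2\big)$ with $RE_t = \left\|M_w(D) - M_w(P_t)\right\|_1$ the reconstruction error of $P_t$; the $t=0$ case is consistent because $P_0$ assigns every cell its own interval, so $RE_0 = 0$. Hence the returned pair $(P,\rho)$ satisfies $\rho = |P|^2 / \big(\pi (\eta\cdot Contr - RE_P)^2\big)$ with $RE_P = \left\|M_w(D)-M_w(P)\right\|_1$. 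Next I would translate $\rho$ into a noise scale: the measured (partitioned) marginal still has sensitivity $1$, so by the proposition on the zCDP of the Gaussian Mechanism, spending budget $\rho$ corresponds to adding noise of scale $\sigma = 1/\sqrt{2\rho}$. Substituting the invariant gives $\sigma = \sqrt{\pi}\,(\eta\cdot Contr - RE_P)/(\sqrt{2}\,|P|)$, and therefore the clean identity $\sqrt{2/\pi}\cdot|P|\cdot\sigma = \eta\cdot Contr - RE_P$. Finally I would apply Theorem~\ref{Theo:NoiseMarginalWithPartition} to this $P$ and $\sigma$ — noting that $M_w(\tilde P)$ is exactly the estimator $\tilde p$ of that theorem and $M_w(P) = q(P)$ — to obtain $\mathbb{E}\big[\left\|M_w(D)-M_w(\tilde P)\right\|_1\big] \le \left\|q(P)-q\right\|_1 + \sqrt{2/\pi}\cdot|P|\cdot\sigma = RE_P + (\eta\cdot Contr - RE_P) = \eta\cdot Contr$, which is the claimed bound (read in expectation over the DP noise, matching the form of Theorem~\ref{Theo:NoiseMarginalWithPartition}).

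The crux — and the step I expect to need the most care — is verifying that the returned partition satisfies $RE_P \le \eta\cdot Contr$, so that $(\eta\cdot Contr - RE_P)$ is nonnegative and the final line collapses to $\eta\cdot Contr$ rather than to the larger quantity $2RE_P-\eta\cdot Contr$. This should follow from the termination logic: $RE_t$ is nondecreasing in $t$ while $|P_t|$ shrinks, and as $RE_t$ approaches $\eta\cdot Contr$ from below the budget $\rho_t$ blows up, so the test $\rho_{t+1}>\rho_t$ fires and the loop exits before $RE$ can cross $\eta\cdot Contr$; I would make this monotonicity-and-blow-up argument precise, and, if a single merge could make $RE$ jump past $\eta\cdot Contr$ in one step, note the mild extra condition or safeguard that is needed. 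I would also state explicitly that the inequality is an expectation statement over the Gaussian noise, since the left-hand side as written is random; a high-probability version would instead require a tail bound on the sum of $|P|$ half-normal magnitudes.
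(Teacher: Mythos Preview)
Your proposal is correct and follows essentially the same route as the paper: decompose $\|M_w(D)-M_w(\tilde P)\|_1$ into reconstruction error plus noise error (the paper writes the triangle inequality directly; you invoke Theorem~\ref{Theo:NoiseMarginalWithPartition}, which is the same bound), then plug the budget formula $\rho=|P|^2/\big(\pi(\eta\cdot Contr-RE)^2\big)$ into the Gaussian noise scale to get $NE=\eta\cdot Contr-RE$ exactly. Your additional observations---that the inequality should be read in expectation over the Gaussian noise, and that one needs $RE_P\le\eta\cdot Contr$ for the cancellation to be valid---are points of rigor the paper's proof leaves implicit, so your write-up is, if anything, more careful than the original.
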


\begin{proof}
According to triangle inequality,
\begin{equation}
   \left\|M_{w}(D)-M_{w}(\tilde{P}) \right\|_1 \leq \left\|M_{w}(D)-M_{w}(P) \right\|_1 
   +\left\|M_{w}(P) -M_{w}(\tilde{P}) \right\|_1
\end{equation}
where $\left\|M_{w}(D)-M_{w}(P) \right\|_1$ is the reconstruction error $RE$ caused by partition, and $\left\|M_{w}(P) -M_{w}(\tilde{P}) \right\|_1$ is the noisy error $NE$ caused by Gaussian Mechanism.
Algorithm ~\ref{Algorithm:1D-Partition} computes the privacy budget as  $\rho =\frac{|P|^2}{\pi\cdot(\eta\cdot Contr-RE)^2}$, thus, the noise error can be computed as:
\begin{equation}
  NE=\frac{|P|}{\sqrt{\pi\cdot \rho}}=\eta\cdot Contr-RE
\end{equation}
Therefore, $\left\|M_{w}(D)-M_{w}(\tilde{P}) \right\|_1 \leq RE+NE\leq \eta\cdot Contr$. Since $\eta<1$, the algorithm makes sure that each chosen marginal brings sufficient positive contribution to learning the distribution.
\end{proof}

We can learn from the above theorem that the impact of noise on the contribution each marginal brought is limited.

%
%
%
%
%
We also extend the idea of using partition to reduce the noise error to multidimensional marginals. The size of a multidimensional marginal is typically large, as it is determined by the product of the domain sizes of each dimension. Reducing the size of the multidimensional marginal is significant in order to minimize the noise error, as the noise error is directly related to the size of each marginal.

\textbf{\underline{Multidimensional marginal partition}}. Partitioning for a multidimensional marginal is more of a trial compared to partitioning for a one-dimensional marginal. This is because determining the optimal partition for even two-dimensional data is an NP-complete problem. It becomes difficult to iteratively merge two multidimensional interval pairs into one. Therefore, a common approach is to initially consider the entire domain as a multidimensional interval and then iteratively split the interval with the highest reconstruction error into two, as this is a more feasible strategy.

\begin{algorithm}
\caption{Multi-dim marginal partition}\label{Algorithm:MD-Partition}
{\bf Input:} Multi-dim marginal $w$, Contribution $Contr$, Noise impact $\eta$\\
{\bf Output:} Partition $P$, privacy budget $\rho$.
\begin{algorithmic}[1]
\State Initial the partition $P_0$ by regarding the whole domain as an multi-dimensional interval.
\State $d\leftarrow$ dimension of $w$
\State $t=0$
\State $RE\leftarrow \left\|M_{w}(D)-M_{w}(P_{t}) \right\|_1$
\If {$RE<\eta\cdot Contr$}
    \State $\rho_t\leftarrow \frac{|P_t|^2}{\pi\cdot(\eta\cdot Contr-RE)^2}$
\Else
    \State $\rho_t\leftarrow +\infty$
\EndIf
\While {1}
    \State $d_c\leftarrow t\% d$
    \State $SP\leftarrow$ find the split point in the  $d_c$th dimension with maximum RE reduction
    \State $P_{t+1} \leftarrow$  split $P_{t}$ with $SP$.
    \State $RE\leftarrow \left\|M_{w}(D)-M_{w}(P_{t+1}) \right\|_1$ 
    \If {$RE<\eta\cdot Contr$}
        \State $\rho_{t+1}\leftarrow \frac{|P_{t+1}|^2}{\pi\cdot(\eta\cdot Contr-RE)^2}$
        \If {$\rho_{t+1}>\rho_{t}$}
            \State break;
        \EndIf
    \EndIf
    \State $t\leftarrow t+1$
\EndWhile
\State \textbf{return: $P_t$, $\rho_t$}
\end{algorithmic}
\end{algorithm}


The pseudo-code for the multi-dimensional marginal partition method is shown in Algorithm~\ref{Algorithm:MD-Partition}. Due to the complexity of iteratively merging multi-dimensional interval pairs, the algorithm begins by considering the entire domain as a multi-dimensional interval (line 1). The variable $d$ represents the dimensionality of the marginal (line 2). The algorithm initializes the reconstruction error RE for the current partition (line 4). If RE is less than $\eta\cdot Contr$, it computes the required privacy budget $\rho_t$; otherwise, $\rho_t$ is initialized as $+\infty$.
The algorithm iteratively selects a split point in different dimensions using a round-robin strategy (line 11), and splits the interval into two parts in a way that maximizes the reduction in reconstruction error (line 12). Since the algorithm starts with a single interval, the reconstruction error in the initial iterations may exceed the threshold $\eta\cdot Contr$. To ensure that the total error, including both the reconstruction error and the noise error, is smaller than $\eta\cdot Contr$, we begin checking the required privacy budget after RE falls below $\eta\cdot Contr$ (line 15-20). The algorithm also aims to find the minimum required privacy (line 17-19), which is the same as in Algorithm~\ref{Algorithm:1D-Partition}.

\begin{theorem}
  Suppose $P$ and $\rho$ is the partition and privacy budget for a marginal $w$ returned by algorithm~\ref{Algorithm:MD-Partition}. Thus,
   $\left\|M_{w}(D)-M_{w}(\tilde{P}) \right\|_1\leq \eta\cdot Contr$, where $M_{w}(D)$ is the true marginal on data $D$, and $M_{w}(\tilde{P})$ is the marginal estimated according to the noisy partition $P$.
\end{theorem}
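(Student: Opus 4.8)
The plan is to reuse, almost verbatim, the argument behind Theorem~\ref{theorem:errorlimit}: Algorithm~\ref{Algorithm:MD-Partition} calibrates the privacy budget by the same rule $\rho=\frac{|P|^2}{\pi\cdot(\eta\cdot Contr-RE)^2}$ as Algorithm~\ref{Algorithm:1D-Partition}, so the only thing that changes is how the partition $P$ is built, and that construction does not enter the error bound itself. First I would split the target quantity with the triangle inequality,
\begin{equation}
 \left\|M_{w}(D)-M_{w}(\tilde{P})\right\|_1 \le \left\|M_{w}(D)-M_{w}(P)\right\|_1 + \left\|M_{w}(P)-M_{w}(\tilde{P})\right\|_1,
\end{equation}
abbreviating the first summand by $RE$ (the reconstruction error from replacing each cell count by its interval average) and the second by $NE$ (the error from the Gaussian noise added to the $|P|$ interval counts).

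Next I would express $NE$ through $\rho$. By Theorem~\ref{Theo:NoiseMarginalWithPartition} the expected noise contribution of an estimated marginal with $|P|$ intervals is $\sqrt{2/\pi}\cdot|P|\cdot\sigma$, and the zCDP calibration of the Gaussian Mechanism gives $\rho=\frac{1}{2\sigma^2}$, i.e.\ $\sigma=\frac{1}{\sqrt{2\rho}}$, whence $NE=\frac{|P|}{\sqrt{\pi\rho}}$. Substituting the budget assigned by Algorithm~\ref{Algorithm:MD-Partition} (lines 6 and 16), namely $\rho=\frac{|P|^2}{\pi\cdot(\eta\cdot Contr-RE)^2}$, gives $\sqrt{\pi\rho}=\frac{|P|}{\eta\cdot Contr-RE}$ and therefore $NE=\eta\cdot Contr-RE$. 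Combined with the triangle inequality this yields $\left\|M_{w}(D)-M_{w}(\tilde{P})\right\|_1\le RE+NE=\eta\cdot Contr$, which is the statement; since $\eta<1$, each chosen marginal still contributes positively to the model.

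The step needing the most care — and the only genuine difference from the proof of Theorem~\ref{theorem:errorlimit} — is checking that the pair $(P,\rho)$ actually returned by Algorithm~\ref{Algorithm:MD-Partition} satisfies $RE<\eta\cdot Contr$, so that $\rho$ is finite and the substitution above is legitimate. The control flow is more delicate here than in the one-dimensional case: the budget is given a finite value only inside the branches guarded by $RE<\eta\cdot Contr$ (the initial test and the post-split test), and is otherwise set to $+\infty$ while the loop keeps splitting. I would argue that the returned value is always one assigned under a ``passing'' branch: the loop exits only via the break statement, which is reached after computing a finite $\rho_{t+1}$ and finding $\rho_{t+1}>\rho_t$, so at that moment $\rho_t$ is itself finite and hence was assigned while $RE(P_t)<\eta\cdot Contr$; and such a configuration is eventually reached, because repeatedly splitting the interval of largest reconstruction error drives $RE$ monotonically down to $0$ (in the extreme, until every cell is its own interval), at which point the guard necessarily holds. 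Once this is established, the ``decrease-then-increase'' behaviour of $\rho$ as $|P|$ grows (illustrated in Example and Figure~\ref{Fig:example}) only determines which partition is returned, not the validity of the bound, so nothing further is required; and, as in Theorem~\ref{Theo:NoiseMarginalWithPartition}, the inequality should be read in expectation over the Gaussian noise.
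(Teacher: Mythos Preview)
Your proposal is correct and follows the same approach as the paper, which simply states that the theorem can be proven using the same method as the proof of Theorem~\ref{theorem:errorlimit}. Your additional verification that Algorithm~\ref{Algorithm:MD-Partition} returns a pair $(P,\rho)$ with $RE<\eta\cdot Contr$ (and hence finite $\rho$) is a welcome piece of rigor that the paper itself omits.
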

\begin{proof}
  This theorem can be proven using the same method as the proof of Theorem~\ref{theorem:errorlimit}.
\end{proof}

According to the partition algorithms we have proposed for both one-dim and multi-dim marginals, we are able to reduce the size of large marginals and subsequently decrease the privacy budget needed for them. This provides us with the potential to utilize more marginals in order to enhance the quality of the distribution model.

\textbf{\underline{Skill to improve efficiency}}. In Algorithm~\ref{Algorithm:Sele}, computing a new partition for each marginal in every iteration of marginal selection may seem costly. However, we can improve the efficiency by precomputing and storing the partitions  $|P_0|$, $|P_1|$, ..., $|P_n|$ in advance. Since both the interval pair to be merged for one-dimensional marginal and the split points for multi-dimensional marginal are determined by the reconstruction error, we can avoid repeatedly computing the partition in later iterations. Instead, the algorithm only needs to find the termination point $|P_t|$ determined by the potential contribution in each iteration. This optimization significantly reduces the computational overhead and enhances the efficiency of the algorithm.


\section{Experiments}

In this section we will empirically evaluate PPSyn, comparing it against various state-of-the-art mechanisms, while also analyzing the impact of different parameters within the methodology.

\subsection{Experimental Setup}

In this section, we will introduce the experimental setup of our method, including the source of the utilized datasets. We will also provide error metrics and conduct experiments on two crucial parameters that influence the performance of the method.

\noindent \underline{\textbf{Hardware}}

All our experiments are implemented in Python 3.8 and run on Intel Xeon Gold 5218R@2.10GHz and 32GB memory.

\noindent \underline{\textbf{Datasets}}

We evaluate the performance of our method on three datasets, including three real-world datasets.

(1) \textbf{ADULT}\footnote{https://archive.ics.uci.edu/dataset/2/adult}. The first dataset is a real-world dataset consisting of 48, 842 records from the 1996 US Census data. It contains personal information, including 14 attributes such as age, gender, race, salary, etc. The entire domain size is approximately 65,500.

(2) \textbf{BR2000}\footnote{https://usa.ipums.org/usa/cite.shtml}. The second dataset contains 38,000 records of personal information obtained from Brazil Census. It features 14 attributes, such as gender, work address, and more. 

(3) \textbf{LOANS}\footnote{https://github.com/giusevtr/fem/tree/master/datasets}. The last one dataset is coming from the the UCI repository. It consists of 42,535 records whose features include 48 attributes. It is also the dataset used by methods like MST. It is also used in FEM ~\cite{Vietri2020NewOA} and RAP ~\cite{DBLP:conf/icml/AydoreBKKM0S21}.

\noindent \underline{\textbf{Competitors}}

(1) AIM~\cite{McKenna2022AIMAA}  is the state-of-the-art private synthetic data generation method, it  is based on workload and aims to infer the joint probability distribution of the entire original database using the query set provided by the user. It follows the select-measure-generate paradigm, which is a common approach used in similar methods. Its uniqueness lies in AIM proposing a novel scoring function, which  in each iteration, enables the selection of marginals that are more conducive to PGM learning compared to other similar methods.

(2) MST~\cite{McKenna2021WinningTN} is also a workload-based method, does not depend on the availability of public or provisional data but uses a portion of the privacy budget to measuring marginals to  obtain a better initial distribution for probabilistic graphs. Its distinct feature is the modularization of the select-measure-generate three components, making it akin to a baseline method.

(3) PrivMRF~\cite{Cai2021DataSV} differs from the above two methods as it is a data-based heuristic approach that considers the correlations among all selected attributes while ensuring $\theta$-useful (A criterion that ensures the selected marginals under a limited privacy budget are low-dimensional and capable of capturing essential features of the input data.). Unlike the previous two approaches, PrivMRF departs from the entire dataset to select marginals that satisfy $\theta$-useful criteria for participation in the training of the probabilistic graph. 

\noindent \underline{\textbf{Error Metrics}}

In this paper, we consider the following three error metrics. In the following metrics, $W$ is the workload, $Q$ is the set of range queries,  $n$ denotes the number of queries, $p$ represents the true marginal, and $\hat{p}$ represents the marginal estimated based on the synthetic dataset.

(1) Workload error. The mean squared error (MSE) of marginals in the workload 
$W$:$\frac{1}{n\cdot|W|} \sum_{w\in W} \left\|p(w)- \hat{p}(w)\right\|_1$.

(2) Range query error. The mean squared error (MSE) of range queries in $Q$: $\frac{1}{n} \sum_{q\in Q} (p(q)- \hat{p}(q))^{2}$.

(3) Classification error. The mis-classification rate of the SVM model when trained on synthetic data generated by different methods: $1-(TP+TN)/N$. $TP$ stands for the number of samples correctly predicted as positive, $TN$ stands for the number of samples correctly predicted as negative, and $N$ represents the total number of samples. 



\noindent \underline{\textbf{Parameters}}


$\epsilon$: The privacy budget.

$d$: The dimensions of the workload significantly impacts the performance of workload-based methods, as it dictates which marginals should be inferred from the probabilistic graph to approximate the distribution.

$\eta\in (0,1)$: It defines the maximum permissible level of error during the marginal selection phase. By setting this constraint, the algorithm can prioritize the selection of marginals with lower total error, thereby increasing the chances of selecting marginals that are more valuable for the distribution.

\begin{figure*}
  \centering
    \subfigure[ADULT.]{
      \label{Fig:ADULT_W}
      \includegraphics[width=0.3\linewidth]{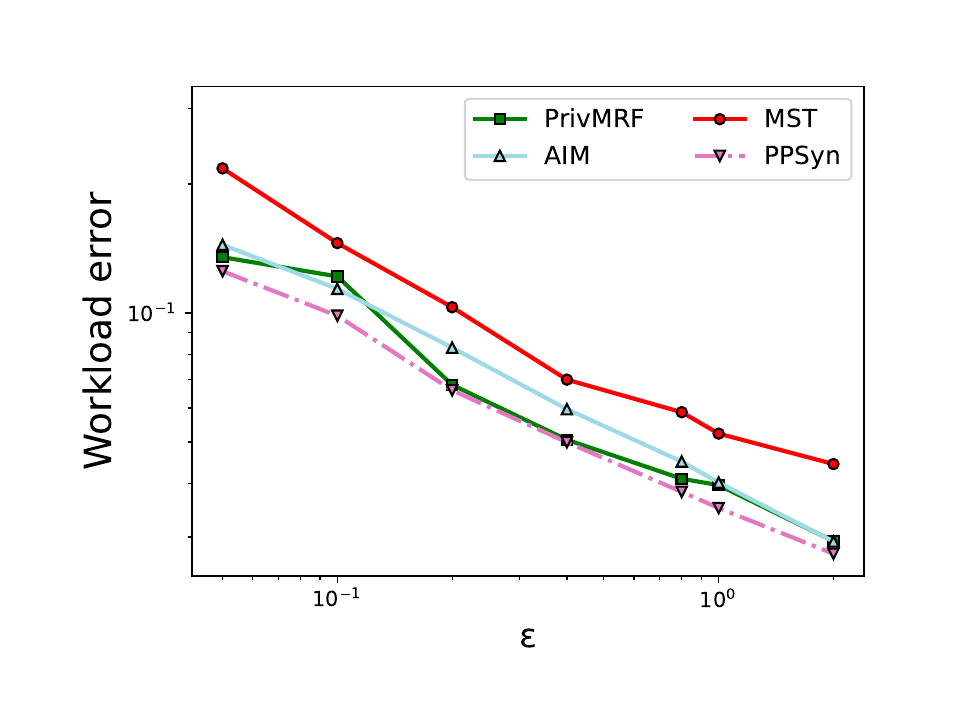}}
    \subfigure[BR2000.]{
      \label{Fig:BR2000_W}
      \includegraphics[width=0.3\linewidth]{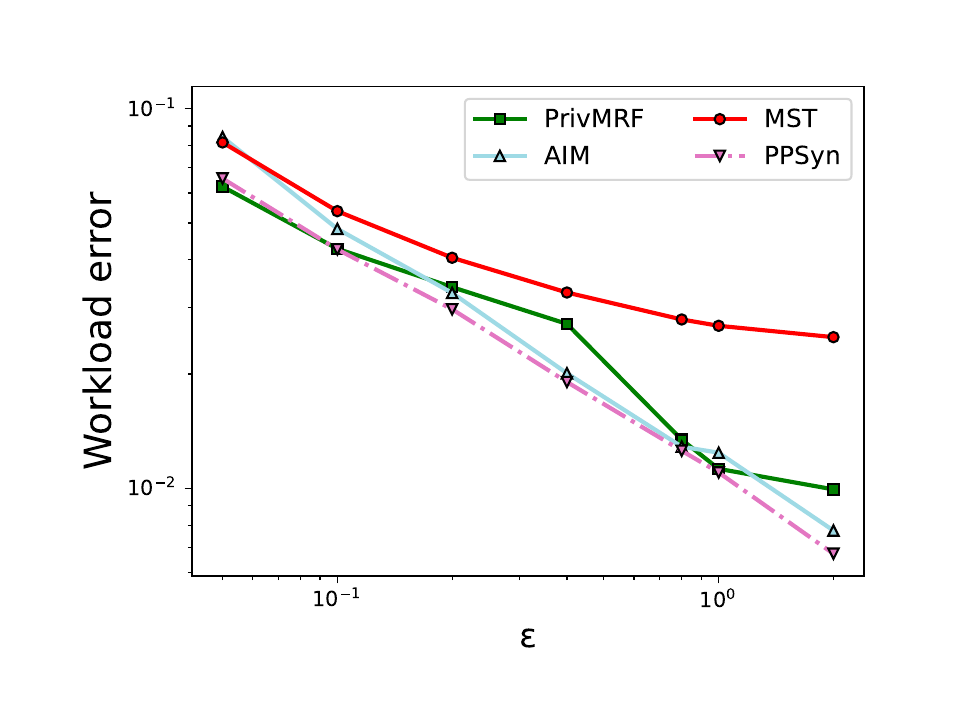}}
    \subfigure[LOANS.]{
      \label{Fig:LOAN_W}
      \includegraphics[width=0.3\linewidth]{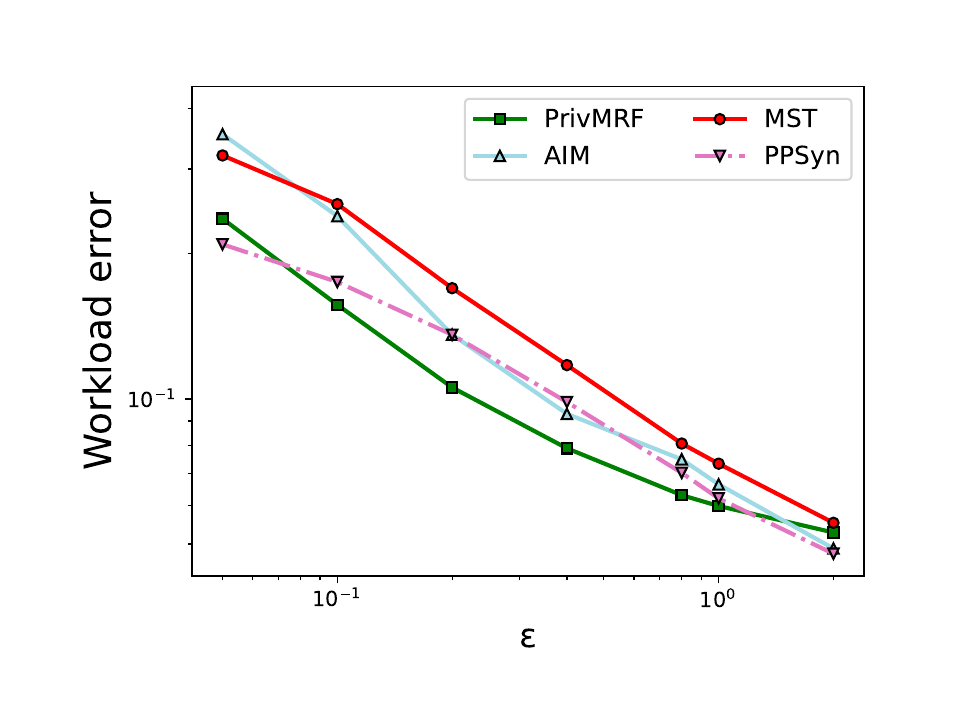}}
  \caption{The Accuracy of Worklaod}
  \label{Fig:accuracy_Workload}
  \end{figure*}

In this section, we will empirically evaluate the accuracy of PPSyn using three error metrics. ``workload error'' reflects the average error between synthetic data and original data, while ``range query error'' assesses the synthetic data's ability to respond to range queries. We also use the ``classification error'' to test whether the synthetic data retains the expected characteristics for each attribute.

\subsubsection{Workload error}
We test the workload error of different methods on three datasets, and the results are shown in Figure~\ref{Fig:accuracy_Workload}. 
The workload is derived from random selection and, similar to AIM, ensures that all chosen marginal combinations do not exceed the predetermined maximum value. (The value indirectly limits the size of the model, preventing it from becoming too large and causing intractability.) We consider that each workload of the three datasets consists of a set of three-way marginal queries, and each attribute is uniformly and randomly selected.
Moreover, the randomness in the workload should be generated using a fixed random seed to ensure consistency of the workload across different datasets and mechanisms.

We can observe from the results that our approach PPSyn consistently outperforms other methods in most cases. Specifically, in the ADULT and BR2000 datasets, the frequency distribution of attribute values for each attribute is relatively uniform, and extreme counts are not prevalent. As a result, PPSyn is able to effectively reduce noise error through partitioning, which leads to improved private data synthesis.

On the other hand, in the LOANS dataset, some attributes have large domains, but their distribution is more skewed. This leads to a higher reconstruction error due to the assumption of a uniform distribution. In this scenario, the data-based method PrivMRF performs better than the workload-based methods on the LOANS dataset. However, our PPSyn still demonstrates superior performance compared to the workload-based methods AIM and MST.

Furthermore, the figures indicate that the error decreases as the privacy budget $\epsilon$ increases. The effectiveness of our method becomes more pronounced when the privacy budget is set at a lower level.

\subsubsection{Range Queries}

We conduct 3-dim and 5-dim range queris on the private synthesis data and reports the error in Figure~\ref{Fig:accuracy_R} and Figure~\ref{Fig:accuracy_R_5}, respectively. 
The 3-dimensional range queries generate all triple combinations from all attributes. Each attribute is assigned a weight sampled from a square exponential distribution. This sampling method is similar to AIM , where the randomness introduced during the construction of the workload is determined by a fixed random seed for different mechanisms. This ensures that the workload remains consistent across different mechanisms' executions. For all three datasets, we randomly generate 210 attribute combinations to form the final workload used to evaluate the range query accuracy of different mechanisms. 
Under the constraint of maximum workload, The 5-dimensional range queries consist of a total of 37 attribute combinations. 

\begin{figure*}
  \centering
    \subfigure[ADULT.]{
      \label{Fig:ADULT_R}
      \includegraphics[width=0.3\linewidth]{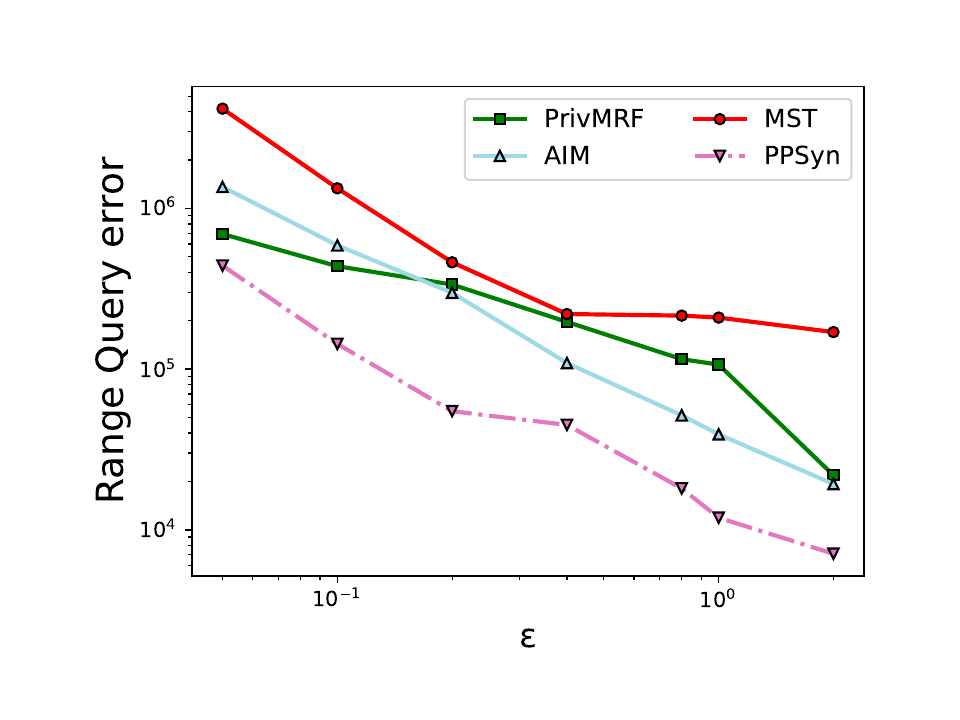}}
    \subfigure[BR2000.]{
      \label{Fig:BR2000_R}
      \includegraphics[width=0.3\linewidth]{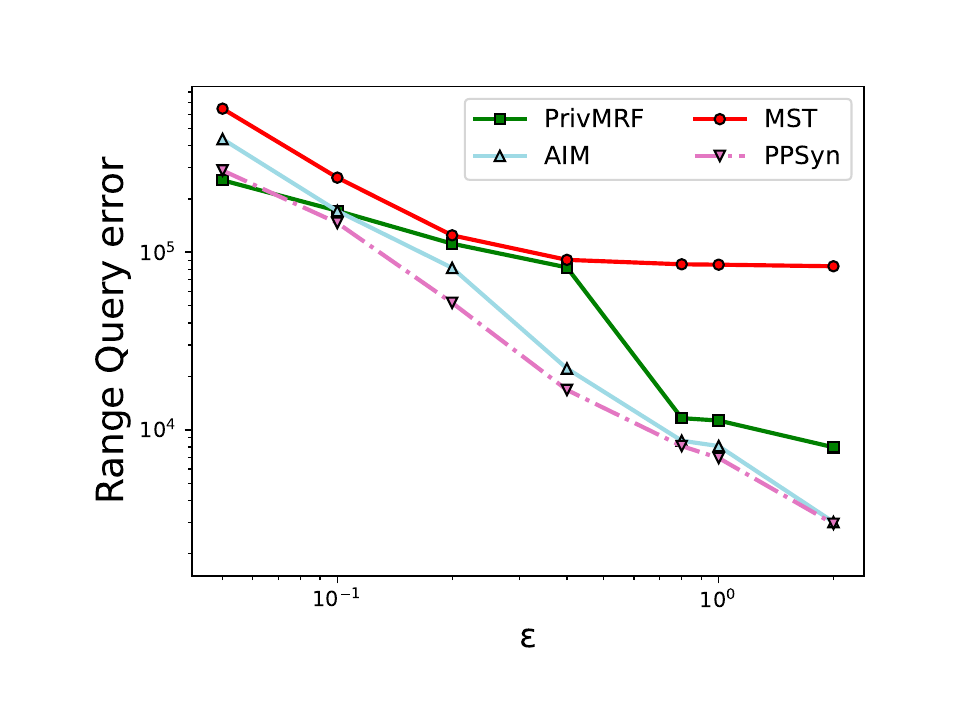}}
    \subfigure[LOANS.]{
      \label{Fig:LOAN_R}
      \includegraphics[width=0.3\linewidth]{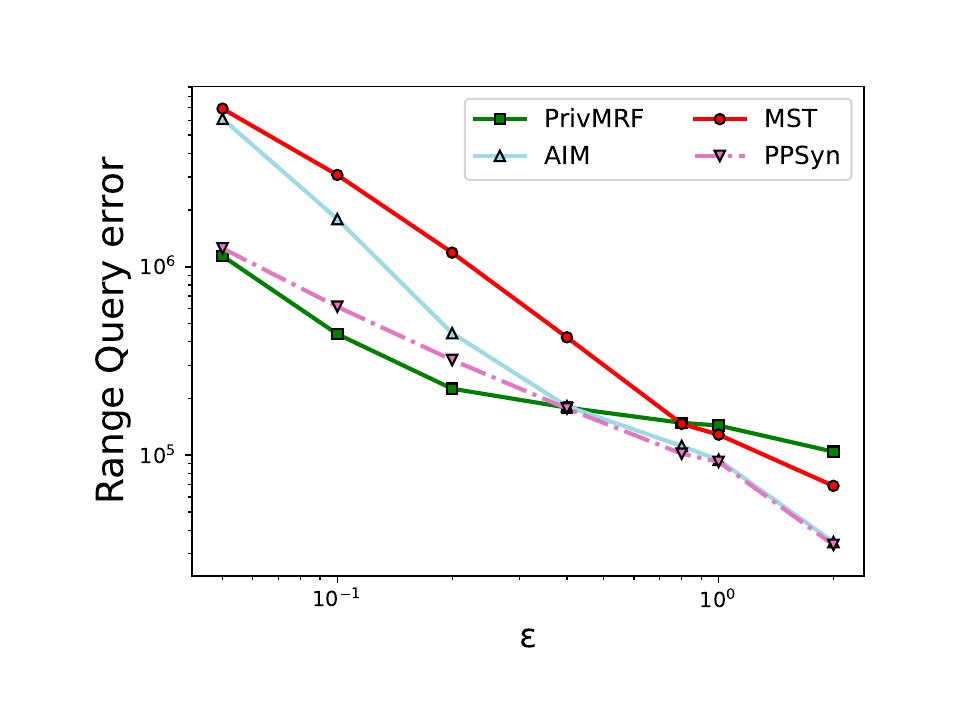}}
  \caption{The Accuracy of 3-Dim Range Queries}
  \label{Fig:accuracy_R}
  \end{figure*}

  \begin{figure*}
    \centering
      \subfigure[ADULT.]{
        \label{Fig:ADULT_R_5}
        \includegraphics[width=0.3\linewidth]{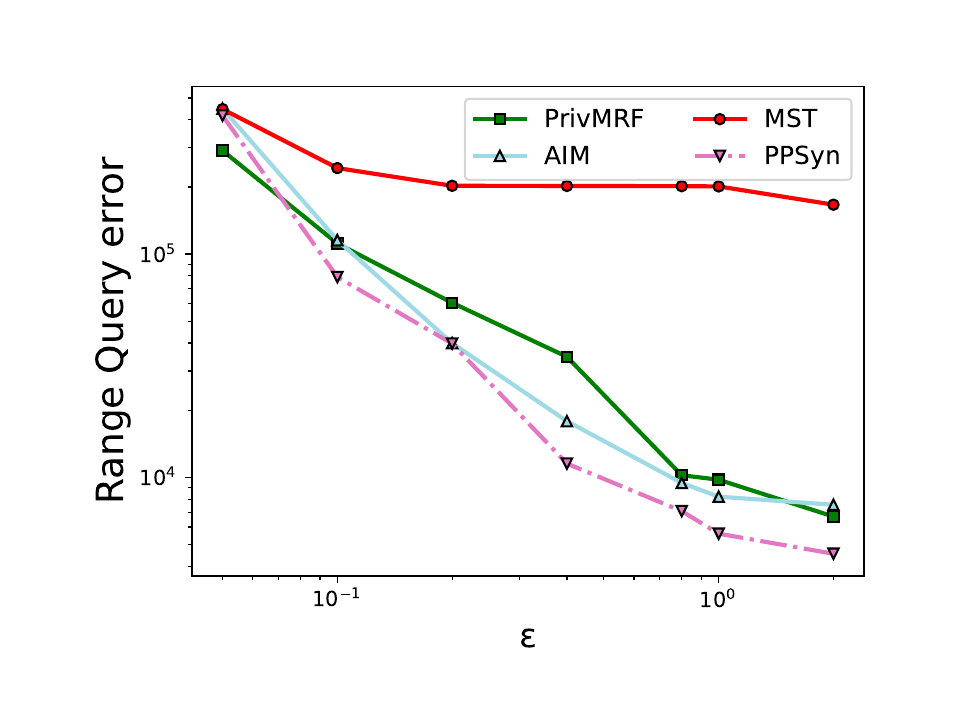}}
      \subfigure[BR2000.]{
        \label{Fig:BR2000_R_5}
        \includegraphics[width=0.3\linewidth]{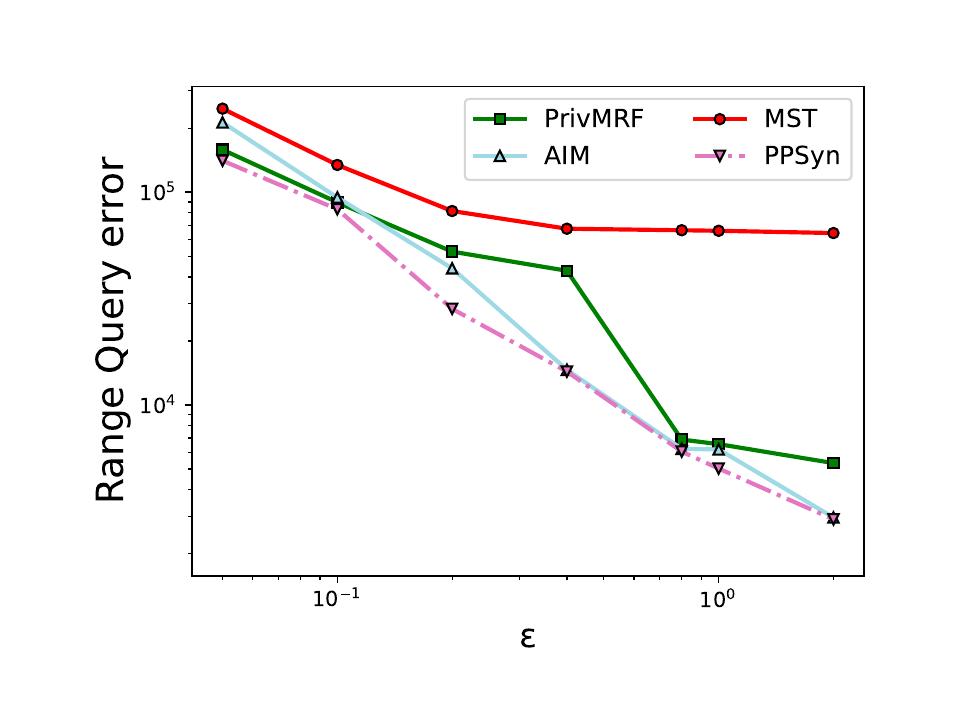}}
      \subfigure[LOANS.]{
        \label{Fig:LOAN_R_5}
        \includegraphics[width=0.3\linewidth]{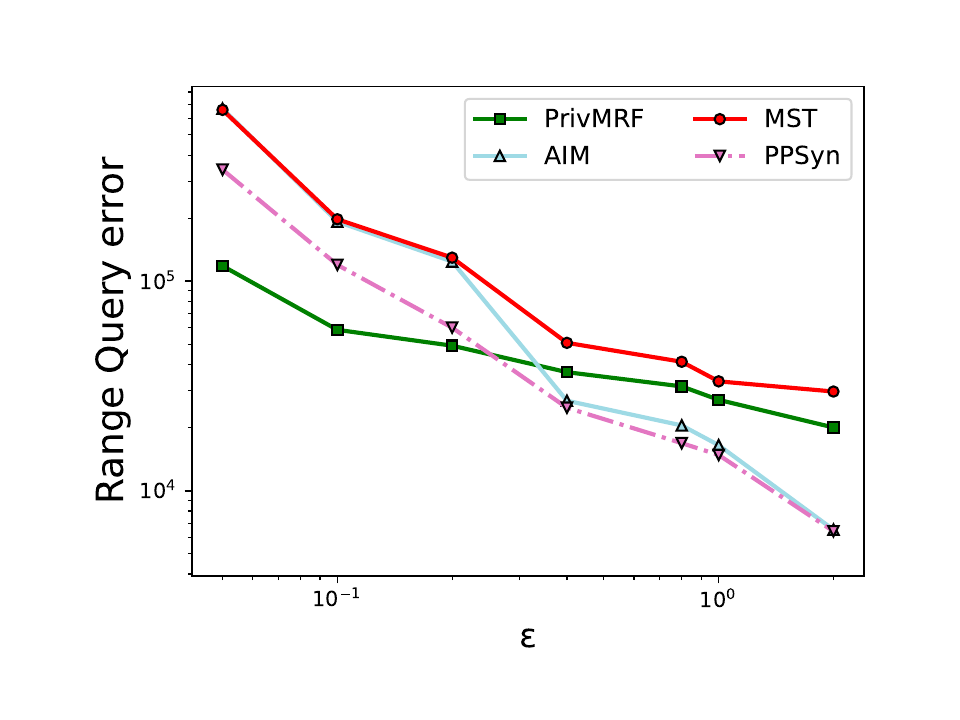}}
    \caption{The Accuracy of 5-Dim Range Queries}
    \label{Fig:accuracy_R_5}
    \end{figure*}
\subsection{Accuracy}

We consider privacy budget $\epsilon$ varying from 0.05 to 2.0. Figure~\ref{Fig:accuracy_R} shows the MSE comparison between our method and existing algorithms. Each plot corresponds to a dataset, with the x-axis representing changes in privacy budget, and the y-axis displaying the corresponding MSE results for range queries.
Our method achieves lower MSE compared to the other workload-based methods due to the improvement made in privacy budget allocation, which allows for more effective aggregation of valid counts during the range partitioning process.

We can observe that the data-based method sometimes outperforms other methods when handling range queries. This is because PrivMRF itself effectively utilizes the $\theta$-useful property to capture the correlations between attributes. For instance, in the loans dataset, many attributes exhibit extreme counts, with certain values having significantly more occurrences than others, resulting in sparse counts that introduce excessive noise. PrivMRF, during the marginals filtering stage, is expected to reduce the involvement of such attributes as much as possible to mitigate this issue. Meaning that, the data-based methods performs better of the queries no in the workload.

Under 5-dimensional range queries, PPSyn still demonstrates outstanding performance compared with the workload-based methods AIM and MST. In reality, for extensive range queries, it reduces the error introduced by the uniform distribution assumption. Because in this scenario, the queries' endpoints often ignore the uniform distribution assumption, which reduces the accumulation of errors stopping from such an assumption.

\begin{figure*}
  \centering
    \subfigure[ADULT.]{
      \label{Fig:ADULT_S}
      \includegraphics[width=0.45\linewidth]{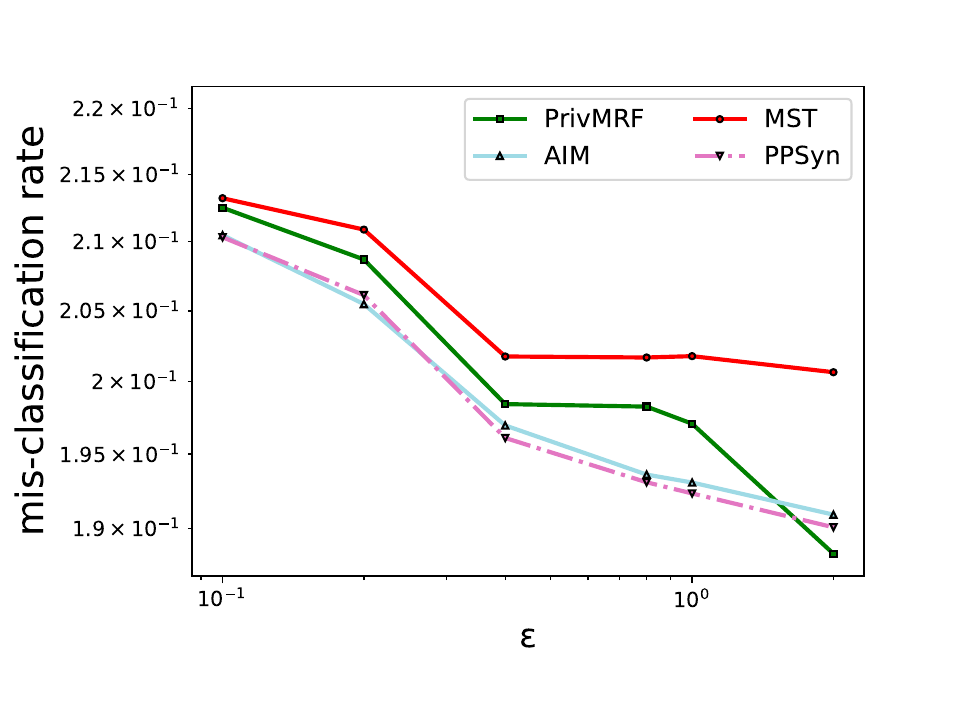}}
    \subfigure[BR2000.]{
      \label{Fig:BR2000_S}
      \includegraphics[width=0.45\linewidth]{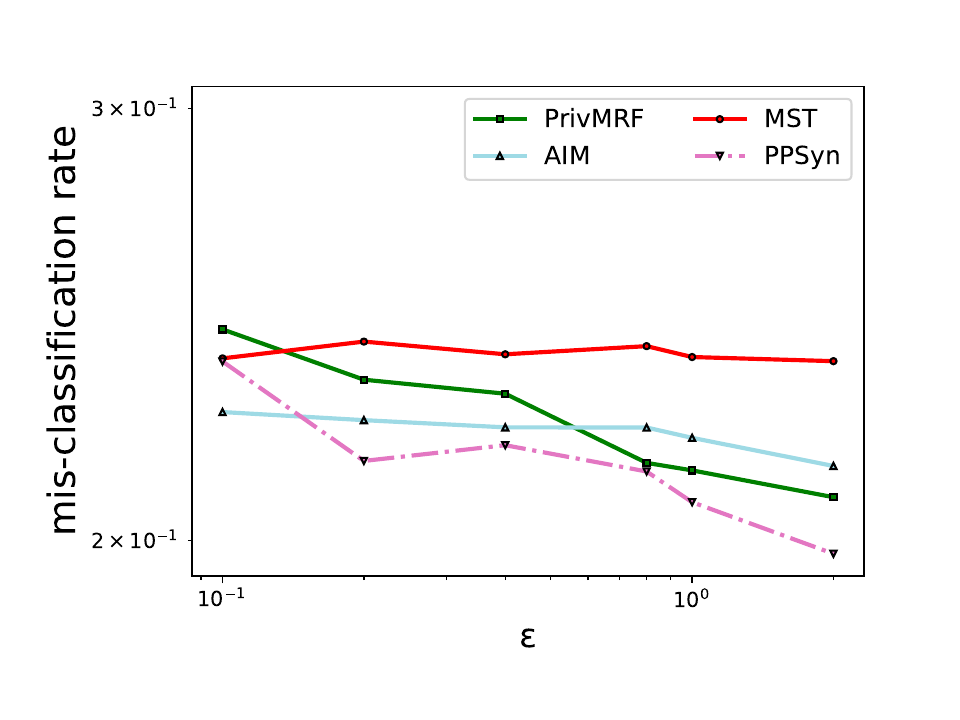}}

  \caption{The Accuracy of SVM}
  \label{Fig:accuracy_S}
  \end{figure*}

\subsubsection{Classification}

\textbf{Train the SVM classifier using the synthetic dataset.} 

For SVM classification, we followed the experimental settings used in the PrivMRF experiments to ensure consistency. In the training task, we employed 5-fold cross-validation. This means that we divided the dataset into 5 subsets, with 20\% of the data reserved as the test set and the remaining 80\% used as the training set for generating synthetic data for each mechanism.
To train the SVM classifier, we selected attributes of greater interest to users as the target attributes, while the remaining attributes served as features. The synthetic datasets generated by the privacy mechanisms were then used as the training set for the SVM classifier.
To evaluate the performance, we calculated the average misclassification rate across the 5 folds. This rate represents the proportion of misclassified instances in the test set.
We conducted these experiments on the ADULT and BR2000 datasets. However, we excluded the LOANS dataset from the SVM classification testing since it is not suitable for classification. The majority of attributes in the LOANS dataset are not suitable as target classification attributes. This would make it challenging to test whether the privacy mechanisms successfully retain the original features of these attributes during the classification process.

Based on the provided result in Figure~\ref{Fig:accuracy_S}, we can make the following observations.
 Our method consistently outperforms PrivMRF in terms of classification accuracy, especially when the privacy budget is low. This indicates that our method is more effective in generating synthetic data that is suitable for the SVM classification task. As the privacy budget increases, PrivMRF's performance in classification accuracy improves gradually. However, our method still maintains its superiority, showing a prominent improvement with increasing privacy budget.

These observations highlight the effectiveness of our method in generating synthetic data that preserves the original features necessary for SVM classification. Our method consistently outperforms other methods in terms of classification accuracy across different privacy budget settings. The reason for our method's superior performance is that we reduce the required budget of each large marginal, which allows us to allocate more budget to choose high-dimensional marginals that represent correlations among different dimensions. By leveraging these high-dimensional marginals, our method is able to capture and preserve complex relationships between attributes, leading to more accurate classification results.
In contrast, other methods such as PrivMRF may allocate a significant portion of the privacy budget to individual large marginals, limiting its ability to capture and preserve high-dimensional correlations. This can result in lower classification accuracy, especially in scenarios with low privacy budgets.

\subsection{Impact of parameters}

\begin{figure*}
  \centering
    \subfigure[ADULT.]{
      \label{Fig:ADULT_eta}
      \includegraphics[width=0.3\linewidth]{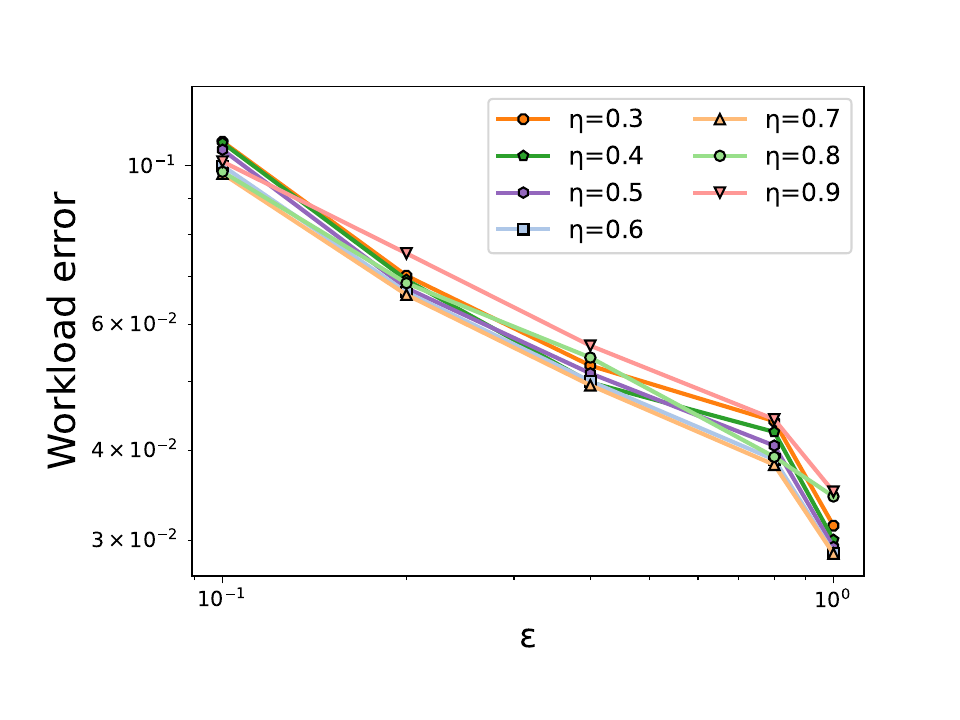}}
    \subfigure[BR2000.]{
      \label{Fig:BR2000_eta}
      \includegraphics[width=0.3\linewidth]{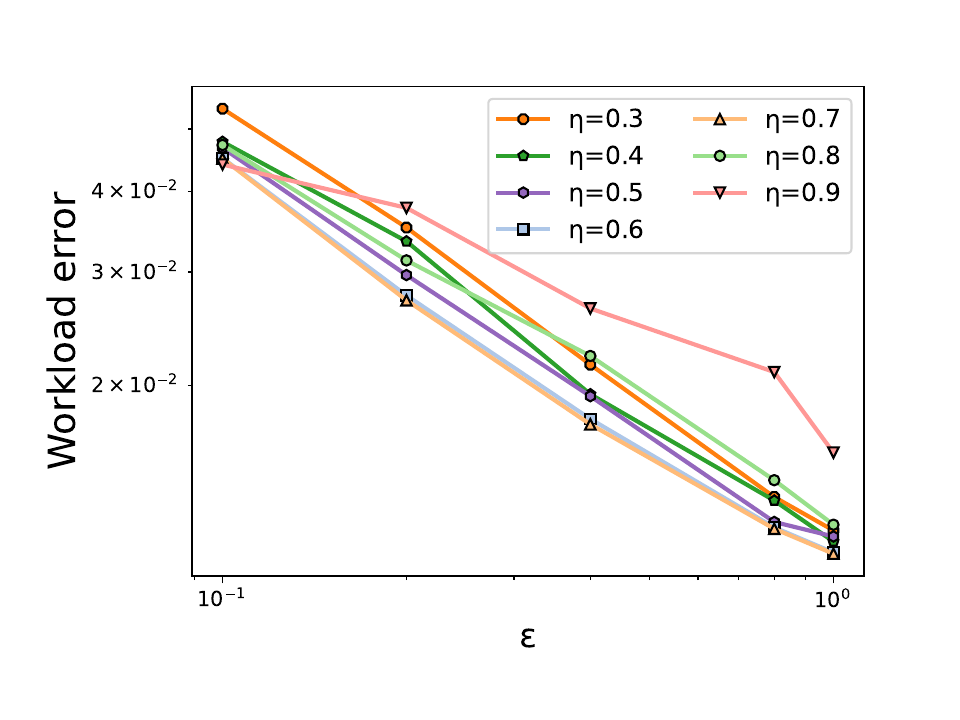}}
    \subfigure[LOANS.]{
      \label{Fig:LOAN_eta}
      \includegraphics[width=0.3\linewidth]{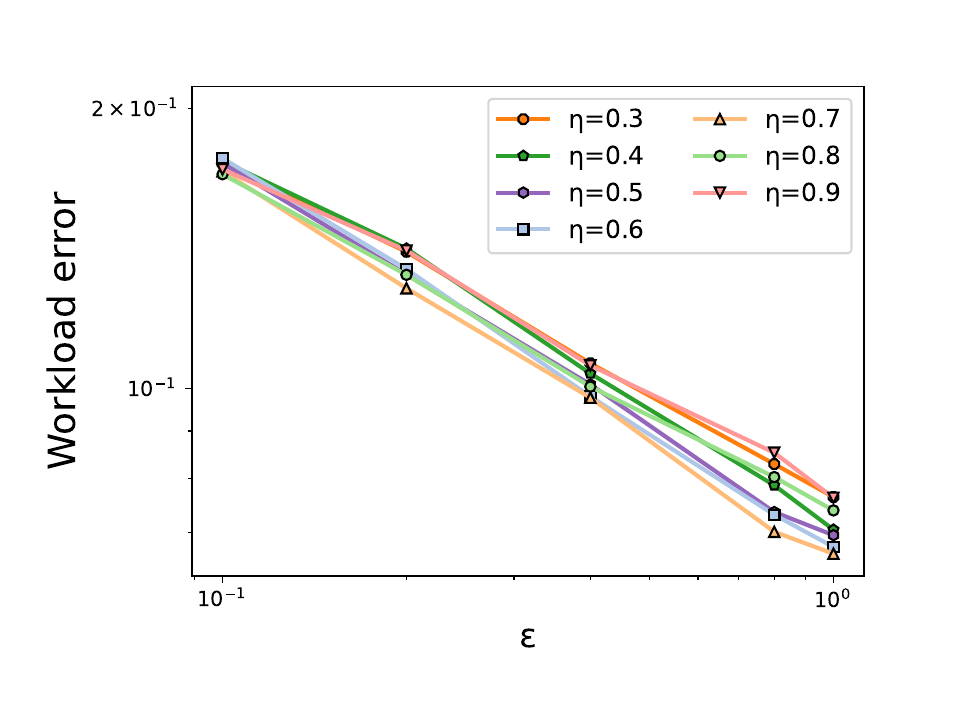}}
  \caption{The impact of $\eta$ on workload error}
  \label{Fig:accuracy_eta}
  \end{figure*}

  \begin{figure*}
    \centering
      \subfigure[ADULT.]{
        \label{Fig:ADULT_R_eta}
        \includegraphics[width=0.3\linewidth]{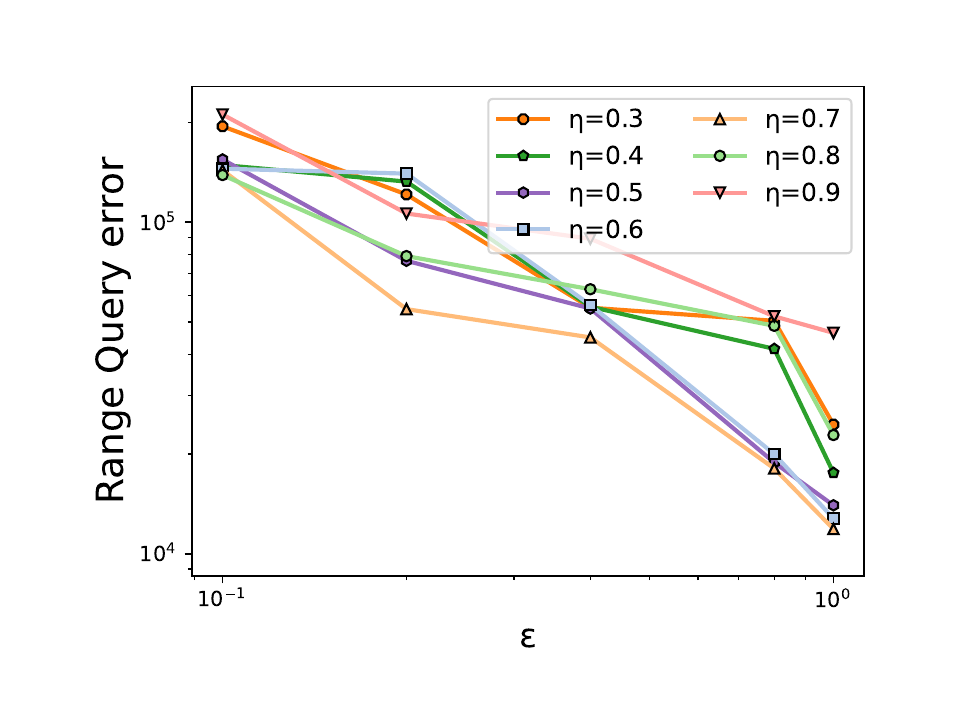}}
      \subfigure[BR2000.]{
        \label{Fig:BR2000_R_eta}
        \includegraphics[width=0.3\linewidth]{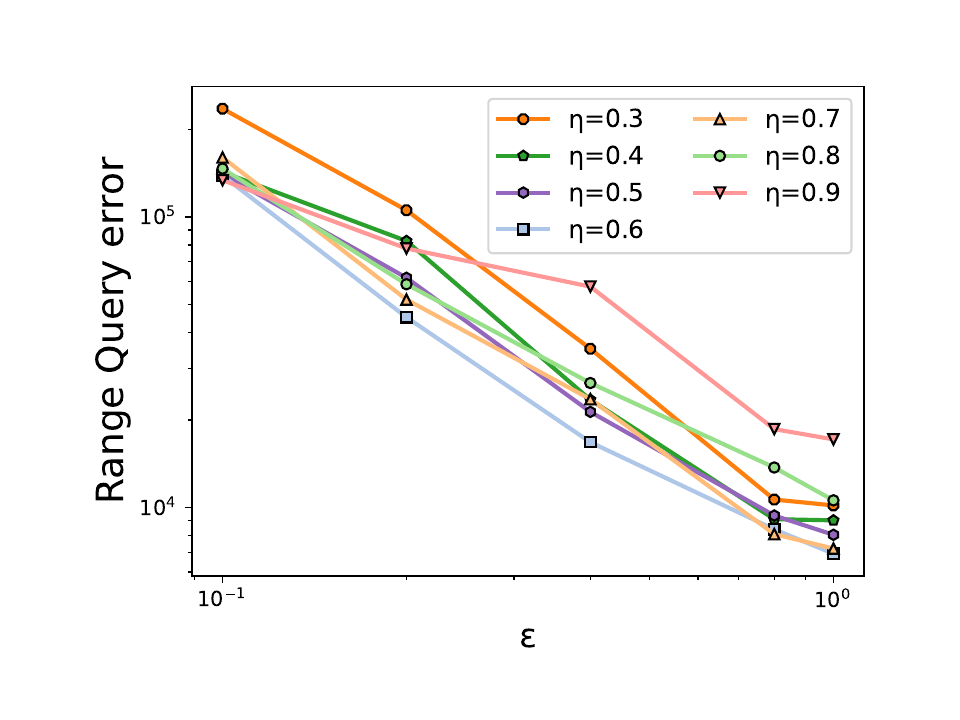}}
      \subfigure[LOANS.]{
        \label{Fig:LOAN_R_eta}
        \includegraphics[width=0.3\linewidth]{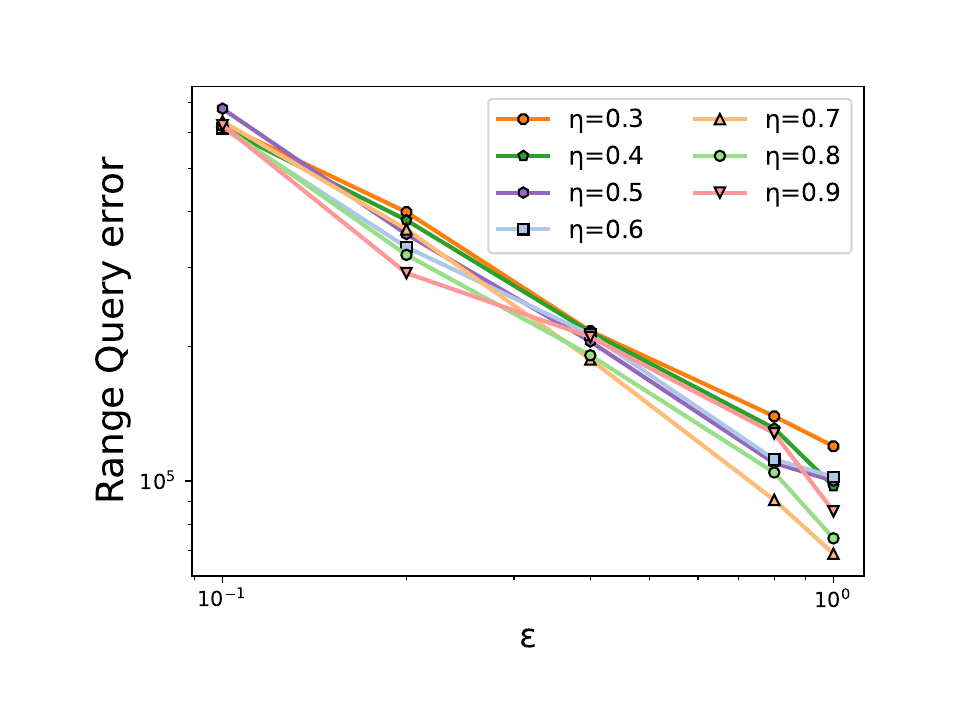}}
    \caption{The impact of $\eta$ on range query error}
    \label{Fig:accuracy_R_eta}
    \end{figure*}

In this section, we will consider the impact of $\eta$ and $d$ to PPSyn. We will examine the effects on both workload error and range query, with a focus on workload error for parameter $d$.

\textbf{Impact of $\eta$ on the accuracy.} 
We will evaluate the influence of $\eta$  by varying it from 0.3 to 1.0 with intervals of 0.1. Intuitively, marginals with true errors ranging from 0.7 to 0.8 are considered valuable, as they effectively preserve important information while introducing acceptable levels of noise. Marginals close to 1.0, however, may carry excessive error, leading to the accumulation of unnecessary errors.

As depicted in Figure~\ref{Fig:accuracy_eta} and Figure~\ref{Fig:accuracy_R_eta}, when the error parameter $\eta$ is set close to 0.7, it is possible to effectively reduce the impact of errors. In our experimental observations, we find that under low error control, the number of selectable marginals decreases. On the other hand, under high error control, the algorithm accumulates marginals where noise overwhelms valuable information. Therefore, choosing an appropriate error control value can help prevent the accumulation of marginals that are filled with excessive error.

Another interesting observation is that as the $\eta$ parameter approaches 1.0, more marginals participate in the calculation of the probabilistic graph. This is because the budget allocated to each marginal decreases, leading to more selection rounds. However, as a consequence, the accumulation of noise becomes more pronounced, resulting in diminished final accuracy. This emphasizes the importance of reasonable error control.

By carefully selecting the value of $\eta$, we can effectively manage the impact of error on the contribution of each marginal. This ensures that PPSyn keeps selecting marginals bringing positive contribution to the distribution model.




\begin{figure*}
  \centering
      \includegraphics[width=0.6\linewidth]{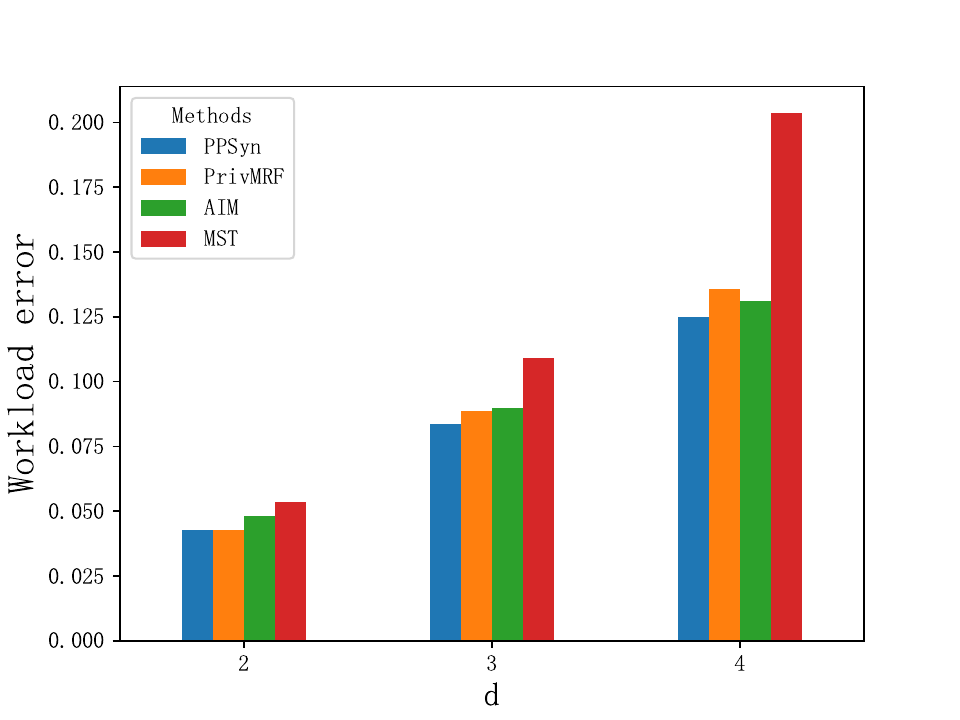}
  \caption{The impact of $d$ on workload error (BR2000, $\epsilon$=0.1)}
  \label{Fig:accuracy_W_3}
\end{figure*}

\textbf{Impact of  $d$ on the accuracy.} 
    The parameter $d$ represents the dimensionality of each marginal in the workload. Figure~\ref{Fig:accuracy_W_3} compares all methods in terms of the workload error of the $d$-way marginals derived from the synthetic data on BR2000, with a privacy budget set at 0.1. As expected, all mechanisms experience an increase in error as $d$ increases. This is because the increase in zero-count attribute values leads to more counts being overwhelmed by noise. However, we can see that our method exhibits the lowest error in the figures. This can be attributed to our proposed partition-based method, which reduces the occurrence of zero-count items. Another interesting observation is that PrivMRF starts performing worse than the workload-based methods when $d$=4. This could potentially be due to the limitation of using only a small number of low-dimensional marginals for high-dimensional distribution estimation. Additionally, as $d$ increases, the gap between PPSyn and other mechanisms becomes even more pronounced.

\subsection{Summaries of experimental results.}

(1) PPSyn is more accurate than the state-of-the-art workload-based methods including AIM and MST for both range queries and SVM classification tasks.
  
(2) PPSyn even outperforms the data-based method PrivMRF on the dataset with large domain.
  
(3) PPSyn is more suitable than the other workload-based methods AIM and MST for workloads with multidimensional marginals.

\section{Conclusion}
In this paper, we introduce a novel algorithm called PPSyn for generating synthetic data. Our approach focuses on reducing the privacy budget required for each marginal while ensuring its positive contribution to the overall representation of the multidimensional distribution. By leveraging partition methods, we are able to allocate the privacy budget more efficiently, enabling us to incorporate larger marginals into the synthetic data generation process.
Compared to the state-of-the-art workload-based private data synthesis methods, PPSyn demonstrates superior performance when the privacy budget is limited at a lower level. Moving forward, we plan to extend our approach to address synthetic data generation under local differential privacy.

\section*{Acknowledgements}
This research was partially supported by the NSFC grant 62202113; GuangDong Basic and Applied Basic Research Foundation SL2022A04J01306; Open Project of Jiangsu Province Big Data Intelligent Engineering Laboratory SDGC2229; the Major Key Project of PCL (Grant No.PCL2021A09, PCL2021A02, PCL2022A03).



\bibliographystyle{elsarticle-num}
\bibliography{mybibfile}
\end{document}